\DeclareMathOperator{\im}{Im}
\renewcommand{\Im}{\im}
\newcommand{\ii}{\mathfrak{i}}
\newcommand{\LL}{\mathrm{L}}
\providecommand\given{}
\newcommand\SetSymbol[1][]{ 
  \nonscript\: #1 \vert \nonscript\:
  \mathopen{} 
  \allowbreak 
} 
\DeclarePairedDelimiterX\Set[1]{\{}{\}}{
  \renewcommand\given{\SetSymbol[\delimsize]}
  #1 
}
\let\epsilon\varepsilon
\theoremstyle{theorem}
\newtheorem{thm}{Theorem}[section]
\newtheorem{lem}[thm]{Lemma}
\newtheorem{cor}[thm]{Corollary}
\newtheorem{prp}[thm]{Proposition}
\newtheorem{cond}[thm]{Condition}
\theoremstyle{definition}
\newtheorem{rem}[thm]{Remark}
\DeclareMathOperator{\ad}{ad}
\DeclareMathOperator{\mdsr}{\mathds{R}}
\DeclareMathOperator{\ci}{\ii}
\DeclareMathOperator{\dd}{d}
\DeclareMathOperator{\Ei}{Ei}
\DeclareMathOperator{\Div}{div}
\renewcommand{\div}{\Div}
\providecommand{\abs}[2][]{#1\lvert#2#1\rvert}
\providecommand{\norm}[2][]{#1\lVert#2#1\rVert}
\providecommand{\jnorm}[2][]{#1\langle#2#1\rangle}
\providecommand{\inv}{^{-1}}
\newcommand{\R}{\mathds{R}}
\newcommand{\C}{\mathds{C}}
\newcommand{\N}{\mathds{N}}
\newcommand{\D}{\mathds{D}}
\newcommand{\id}{\mathds{1}}
\newcommand{\cH}{\mathcal{H}}
\begin{document}
\title{Spectral deformation for two-body dispersive systems with e.g.\
  the Yukawa potential} 
\author{Matthias Engelmann}
\affil{
  Fachbereich Mathematik\\
  Universit\"at Stuttgart} \author{Morten Grud Rasmussen}
\affil{Department of Mathematical Sciences\\
  Aalborg University}
\maketitle
\begin{abstract}
  We find an explicit closed formula for the $k$'th iterated
  commutator $\ad_A^k(H_V(\xi))$ of arbitrary order $k\ge1$ between a
  Hamiltonian $H_V(\xi)=M_{\omega_\xi}+S_{\check V}$ and a conjugate
  operator $A=\frac{\ii}{2}(v_\xi\cdot\nabla+\nabla\cdot v_\xi)$,
  where $M_{\omega_\xi}$ is the operator of multiplication with the
  real analytic function $\omega_\xi$ which depends real analytically
  on the parameter $\xi$, and the operator $S_{\check V}$ is the
  operator of convolution with the (sufficiently nice) function
  $\check V$, and $v_\xi$ is some vector field determined by
  $\omega_\xi$. Under certain assumptions, which are satisfied for the
  Yukawa potential, we then prove estimates of the form
  $\norm{\ad_A^k(H_V(\xi))(H_0(\xi)+\ii)\inv}\le C_\xi^kk!$ where $C_\xi$
  is some constant which depends continuously on $\xi$. The
  Hamiltonian is the fixed total momentum fiber Hamiltonian of an
  abstract two-body dispersive system and the work is inspired by a
  recent result \cite{EMR} which, under conditions including estimates
  of the mentioned type, opens up for spectral deformation and
  analytic perturbation theory of embedded eigenvalues of finite
  multiplicity.
\end{abstract} {\bfseries Mathematical Subject Classification:} 81Q10,
47B47\newline {\bfseries Keywords:} dispersive systems, iterated
commutators, spectral deformation, Yukawa potentials

\newpage

\section{Introduction}
In this paper we consider a two-body dispersive system where the two
particles interact via a pair-potential $V$. We study the iterated
commutators $\ad_A^k(H_V(\xi))=[\ad_A^{k-1}(H_V(\xi)),A]$ of the fixed
total momentum fiber Hamiltonian $H_V(\xi)$ of this system with an
operator $A$, which (in the sense of Mourre) is conjugate to
$H_V(\xi)$. It is well-known from the literature that the nature of the
spectrum and regularity of eigenstates is related to the (iterated)
commutators with a conjugate operator, see e.g.\
\cite{Mourre,JMP,ABG,HunzikerSigal,CattaneoGrafHunziker,FMS1,FMS2,MW}. See
also \cite{MGR} for another result where control of iterated
commutators is needed. 

Recently, together with Jacob Schach M\o ller, the authors developed
an analytic perturbation theory for embedded eigenvalues in
\cite{EMR}, which also contains an example of a non-trivial model
which satisfies the needed conditions for the abstract theory of that
paper.  We consider a version of this model (introduced in details in
Section~\ref{sec:model}) which has fiber Hamiltonians of the type
\begin{align*}
H_V(\xi)=\omega_{\xi} + S_{\check V},
\end{align*}
where $\omega_\xi$ denotes multiplication by a certain analytic
function (see Condition~\ref{cond_ex}) and $S_{\check V}$ denotes
convolution by the inverse Fourier transform of the interaction
potential $V$.  The abstract conditions in \cite{EMR} involve the
requirement that there exists a constant $C_\xi>0$ such that for all
$k\in\N$, the iterated commutator $\ad_A^k(H_V(\xi))$ exists as a
$H_V(\xi)$-bounded operator and
$\norm{\ad_A^k(H_V(\xi))(H_V(\xi)+\ii)\inv}\le C_\xi^kk!$, which
serves as a motivation to study these iterated commutators.

The result presented in this paper covers the results on the model in
\cite{EMR}, but can deal with more singular potentials such as the
Yukawa potential, see Proposition~\ref{prp:yuk}. The main result of
this paper, Theorem~\ref{cor:totalcommutator}, states the existence of
a constant $C_\xi>0$ such that the bound on the iterated commutators
mentioned above holds, given that $V$ satisfies either
Condition~\ref{cond:Lsw} or Condition~\ref{cond:L1}. In fact, we prove
that this constant is continuous as a function of $\xi$. This allows
us to reach the conclusion of \cite[Theorem~3.2]{EMR}, if certain
further assumptions are satisfied, namely relative boundedness of
$S_{\check V}$ wrt.\ the multiplication operator $\omega_\xi$ and that
$[H_V(\xi),iA]$ satisfies a Mourre estimate. See Theorem~\ref{thm-ex}
for details. We note that these further assumptions have already been
verified for a large class of potentials in \cite{EMR}.
 


The main obstacle in proving the bound on the iterated commutator is
to control the commutator with the interaction term. This is done in
several steps. First, we find a closed formula for the $k$'th iterated
commutator for any $k$, see Theorem~\ref{thm:commutatorformula}. Then
we find appropriate estimates for every term that appears in the
formula. These bounds then turn out to have the right behaviour.

The paper is organized as follows. In Section~\ref{sec:model} the
model is introduced, conditions are stated, and the main results are
formulated. To write the closed commutator formula in a compact,
readable form and for use in proofs and various intermediate results,
we introduce some terminology and notation in
Section~\ref{sec:notation}. In Section~\ref{sec:commutatorformula}, we
then state and prove the main technical result,
Theorem~\ref{thm:commutatorformula}, which is the closed commutator
formula. In Section~\ref{sec:betaell}, a technical lemma is proven
which paves the way for an estimate on the scalar factors in the
formula, which is then stated and proved in
Section~\ref{sec:redordfac}. The next step is to estimate the number
of terms in the sum in the formula, which is done in
Section~\ref{sec:noofterms}. Before stating some sufficient a
posteriori assumptions and making the final estimates on the
interaction commutator, we turn our attention to the iterated
commutator of the free Hamiltonian in Section~\ref{sec:freeham}. This
section is essentially a repetition of known results by the present
authors and J. S. M\o ller, see \cite{EMR}, and is included for the
reader's convenience. The methods used for the free Hamiltonian can be
copied to deal with certain parts of the interaction commutator, which
we return to in Section~\ref{sec:interaction}, where we identify a
posteriori assumptions which are sufficient to prove the right kind of
bounds on the iterated commutators with the interaction term. The main
result in this section is Theorem \ref{thm:interaction} in which we
show that Conditions \ref{cond:Lsw} and \ref{cond:L1} both lead to the
desired bounds. We conclude with Section~\ref{sec:Yukawa} where
Proposition~\ref{prp:yuk}, which states that in dimension $d=3$ Yukawa
potentials satisfy Condition~\ref{cond:Lsw}, is proven.

\section{The model and results}
\label{sec:model}
We introduce a two-particle Hamiltonian on $\mathrm{L}^2(\mdsr^{2d})$ by
\begin{align*}
H_V '=\omega_1(p_1)+\omega_2(p_2) + V(x_1-x_2),
\end{align*}
where $p_i =- \ii\nabla_{x_i}$, $x_i\in\mathds{R}^d$.
 
 We impose the following set of conditions on $\omega_1,\omega_2$ and $V$:

\begin{cond}[Properties of $\omega_1, \omega_2$ and $V$]\label{cond_ex}
\leavevmode
\begin{enumerate}
\item The $\omega_i$'s are real-valued, real analytic functions on $\R^d$ and there exists $R>0$, such that the $\omega_i$'s extend to analytic functions in the $d$-dimensional strip 
\begin{equation*}
S_{2R}^d:=\bigl\{(z_1,\dots, z_d)\in\C^d\,\big|\,|\im (z_i)| <  2R, i=1,\dots, d\bigr\}.
\end{equation*}
We denote the analytic continuations of these functions by the same symbols.
\label{cond_ex_v_omega}
\item\label{item-cond_ex2} There exist real numbers $p=s_2\geq s_1$,
  $s_2>0$ and a constant $C>0$ such that
\begin{equation}\label{eq_omega}
|\partial^\alpha\omega_j(k)|\leq C \langle k\rangle^{s_j} ,\quad|\omega_j(k)|\geq \frac1{C}\langle k\rangle^{s_j} - C
\end{equation}
for every multi-index $\alpha\in\N_0^d$, $|\alpha|\leq 1$ and all $k\in S^d_{2R}$.
\label{cond_ex_omega_est}\label{cond_ex_bnd_on_v}
%
\item The Fourier transform $\hat V$ of $V$ exists and $\hat V\in
  \LL^1_{\textup{loc}}(\R^d)$. 
\end{enumerate}
\end{cond}

Here and hereafter, $\jnorm{x}=\sqrt{x^2+1}$ and $\hat f=f^\wedge$
denotes the Fourier transform of $f$.

Conjugating with the Fourier transform, we see that $H_V'$ is
unitarily equivalent to
\begin{align*}
H_V = \omega_1(k_1) + \omega_2(k_2) + t_V,
\end{align*}
where $t_V$ is the partial convolution operator
\begin{align*}
(t_Vf)(k_1,k_2):= \int_{\R^d} \hat V(u)f(k_1-u,k_2+u)\, \mathrm{d}u
\end{align*}
and
\[
\hat V(k) = (2\pi)^{-d/2} \int_{\R^d} \mathrm{e}^{-\ci k \cdot x} V(x) \, \mathrm{d} x.
\]
In order to fibrate $H_V$ w.r.t. total momentum $\xi = k_1+k_2$, we
introduce a unitary operator
$I \colon \LL^2(\R^d\times\R^d)\to \LL^2(\R^d;L^2(\R^d))$ by setting
\[
(If)(\xi) = f(\xi-\cdot,\cdot).
\]
Under this transformation, we find that the Hamiltonian takes the form
\begin{equation*}
I H_V I^* = \int_{\R^d}^{\oplus} H_V(\xi)\,\mathrm{d} \xi, \quad \textup{where} \quad H(\xi)=H_V(\xi)=\omega_{\xi} + S_{\check V},
\end{equation*}
and
\begin{equation*}
\omega_{\xi}(k)=\omega_1(\xi-k) + \omega_2(k), \qquad (S_{\check V}f)(k)=(\check{V}*f)(k).
\end{equation*}
Here $\check V(k)=\hat V(-k)$ is the inverse Fourier transform of $V$
and $\check V *f$ denotes the convolution product. Note that
$H_0(\xi)$ corresponds to the case where the potential is absent. For
brevity, we will often suppress the subscript $V$ in $H_V(\xi)$ and
just write $H(\xi)$. For later use, let $T_0=S_{\check V}$ for a fixed
$V$.

Furthermore, we define a self-adjoint operator for every total
momentum $\xi\in\R^d$ by
\begin{equation}
A_\xi= \frac{\ci}{2}\bigl( v_{\xi}\cdot \nabla_k + \nabla_k\cdot v_{\xi}\bigr).
\label{eq_A}
\end{equation}
If no confusion can arise, we will just write $A=A_\xi$. The vector
field $v_{\xi}$ is given by
\begin{equation}
v_\xi(k) =  \mathrm{e}^{-k^2-\xi^2}(\nabla_k\omega_\xi)(k).
\label{eq_vectorfield}
\end{equation}
The choice of vector field can be regarded as a generalization of the
standard choice
$A=\frac{\ci}{2} x\cdot\nabla + \frac{\ci}{2}\nabla \cdot x$ in Mourre
theory for Schr\"odinger operators. Note that
$\frac{1}{2}x = \nabla x^2$. Thus, the quadratic dispersion relation
in the Schr\"odinger case gets substituted by a more general one. The
exponential weight in both $x$ and $\xi$ is added to make the uniform
estimates in Sections \ref{sec:freeham} and \ref{sec:interaction}
work.

In addition to Condition~\ref{cond_ex}, which is always assumed, we
will sometimes need one of the following two conditions.  First, we
recall the definition of the weak $\LL^s(\R^d)$ norm:
$\smash{\norm{f}_{s,w} = \sup_{\alpha>0}|\Set{x\given
    \abs{f(x)}>\alpha}|^{\frac{1}{s}}}$.
Here $\abs{A}$ for a subset $A$ of $\R^d$ denotes the Lebesgue measure
of $A$. $\mathrm{L}^s_w(\R^d)$ is then the set of all functions for
which $\norm{f}_{s,w}<\infty$.

\begin{cond}[$\LL^s_w$--bounds on $\hat{V}$]\label{cond:Lsw}
 Let $s>1$ with $\max\{\frac{1}{2},1-\frac{p}{d}\} < \frac{1}{s} \leq \frac{2}{d}$, where $p=s_2$ comes from Condition~\ref{cond_ex}. There exists $c'>0$ such that
 \begin{align*}
 \forall \alpha\in\N_0^d : \|\hat{V}^{(\alpha)}\|_{s,w} \leq \alpha! c'^{|\alpha|},
 \end{align*}  
 where $\norm{\cdot}_{s,w}$ denotes the weak $\LL^s(\R^d)$ norm.	
\end{cond}
\begin{rem}
  Note that $\frac{1}{s}\le\frac{2}{d}$ is only a restriction for
  $d\ge 3$; for lower dimensions the condition $s>1$ implies the former.
\end{rem}
\begin{cond}[$\LL^1$--bounds on $\hat{V}$]\label{cond:L1}
  There exists $c'>0$ such that 
  \begin{align*}
    \forall \alpha\in\N_0^d : \|\hat{V}^{(\alpha)}\|_{1} \leq \alpha! c'^{|\alpha|},
  \end{align*}
\end{cond}

Theorem \ref{thm:interaction} can be proven for a potential satisfying
Condition \ref{cond:Lsw} or \ref{cond:L1}. However it is Condition
\ref{cond:Lsw} that we show to hold in the case of the Yukawa
potential in Proposition \ref{prp:yuk}.

The main result of the paper is now as follows.
\begin{thm}
  \label{cor:totalcommutator}
  Assume Condition~\ref{cond_ex} and \emph{either} Condition~\ref{cond:Lsw}
  \emph{or} Condition~\ref{cond:L1}. Then there exists a constant $C(\xi)$
  which depends continuously on $\xi$, such that
  \begin{equation*}
    \norm{\ad_A^k(H(\xi))(H_0(\xi)+\ii)\inv}\le C(\xi)^kk!\,,
  \end{equation*}
  for all $k\in\N$.
\end{thm}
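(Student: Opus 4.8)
The plan is to use linearity of the commutator to split $\ad_A^k(H(\xi)) = \ad_A^k(H_0(\xi)) + \ad_A^k(S_{\check V})$ and to bound the two pieces separately by quantities of the form $C_i(\xi)^k k!$ with $C_i$ continuous in $\xi$; since $\norm{\ad_A^k(H(\xi))(H_0(\xi)+\ii)\inv}$ is then at most $(C_1(\xi)^k+C_2(\xi)^k)k!\le(C_1(\xi)+C_2(\xi))^kk!$, the theorem follows with $C(\xi)=C_1(\xi)+C_2(\xi)$. Only the second piece needs Condition~\ref{cond:Lsw} (or \ref{cond:L1}); the first uses Condition~\ref{cond_ex} alone.

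For the free term, $H_0(\xi)$ is a multiplication operator, and a direct computation gives $[H_0(\xi),A]=-\ii M_{v_\xi\cdot\nabla\omega_\xi}$, so $\ad_A^k(H_0(\xi))$ is again multiplication, by the function $(-\ii v_\xi\cdot\nabla)^k\omega_\xi$. Expanding with the Leibniz rule, derivatives of $\omega_\xi$ of arbitrary order occur (both directly and through the derivatives of $v_\xi$); these are controlled with factorial growth by Cauchy estimates on the analytic continuation of $\omega_\xi$ into the strip $S_{2R}^d$ supplied by Condition~\ref{cond_ex}, while the Gaussian factor $\mathrm{e}^{-k^2-\xi^2}$ in $v_\xi$ and the first-derivative bounds in \eqref{eq_omega} keep the resulting function bounded pointwise by a constant times $\jnorm{k}^{p}$. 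Since \eqref{eq_omega} also bounds $|\omega_\xi(k)|$ from below by a constant multiple of $\jnorm{k}^{p}$ outside a compact set, $(H_0(\xi)+\ii)\inv$ absorbs this growth and one obtains $\norm{\ad_A^k(H_0(\xi))(H_0(\xi)+\ii)\inv}\le C_1(\xi)^kk!$. This is the content of Section~\ref{sec:freeham} and is essentially taken from \cite{EMR}.

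The interaction term is the main obstacle: $S_{\check V}$ is nonlocal and, for the Yukawa potential, singular, so no pointwise argument applies. Here one invokes the explicit closed formula of Theorem~\ref{thm:commutatorformula}, which writes $\ad_A^k(S_{\check V})$ as a finite sum of terms, each of the shape $(\text{scalar function of }k)\times(\text{convolution by a derivative }\check V^{(\alpha)})\times(\text{differential operator in }k)$, with $|\alpha|$ controlled. The estimate then splits into three independent tasks: (i) bounding each scalar ("reduced order") factor --- done via the technical lemma of Section~\ref{sec:betaell} and carried out in Section~\ref{sec:redordfac}, again exploiting the Gaussian decay in $v_\xi$; (ii) bounding, by something at most exponential in $k$, the number of terms in the sum, done in Section~\ref{sec:noofterms}, so that the term count does not spoil the $k!$; and (iii) bounding the operator norm of the elementary building block $S_{\check V^{(\alpha)}}(\text{differential op})(H_0(\xi)+\ii)\inv$. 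In (iii) the differential operator and the $\jnorm{k}^{p}$ lower bound on $H_0(\xi)$ are traded against each other, reducing matters to the operator norm of convolution by $\check V^{(\alpha)}$ composed with a fixed negative power of $\jnorm{k}$: under Condition~\ref{cond:L1} this is Young's inequality, while under Condition~\ref{cond:Lsw} it is a weak-type (Lorentz-space H\"older, i.e.\ Hardy--Littlewood--Sobolev type) estimate --- which is precisely why the admissible range of $s$ in Condition~\ref{cond:Lsw} is tied to $p=s_2$ and $d$. Assembling (i)--(iii) and summing over the (exponentially many) terms yields $\norm{\ad_A^k(S_{\check V})(H_0(\xi)+\ii)\inv}\le C_2(\xi)^kk!$; this is Theorem~\ref{thm:interaction}.

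Finally, $C(\xi)$ can be taken continuous: $v_\xi$ depends real-analytically on $\xi$, the constants furnished by Condition~\ref{cond_ex} are locally uniform in $\xi$, and the constant $c'$ in Condition~\ref{cond:Lsw}/\ref{cond:L1} is $\xi$-independent, so every intermediate constant in the two chains of estimates is locally bounded --- in fact continuous --- in $\xi$, and hence so is $C(\xi)=C_1(\xi)+C_2(\xi)$. The genuinely delicate point is the interplay of (ii) and (iii): the product of the number of terms, the reduced-order factor bounds, and the singular-convolution norms must still grow only like $C^kk!$, rather than like $C^k(k!)^2$ or worse.
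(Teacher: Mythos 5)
Your overall decomposition and plan match the paper exactly: the paper proves the theorem by combining Proposition~\ref{prp_CommBoundsFree} (free part) and Theorem~\ref{thm:interaction} (interaction part), and the chain of supporting lemmas you cite is the one the paper uses. Two points in your narrative misrepresent the mechanics, though. First, the elementary building blocks in Theorem~\ref{thm:commutatorformula} are $M_{\alpha,\beta}T_{\beta,b}M_{a,b}$, i.e.\ \emph{multiplication}~$\times$~\emph{convolution}~$\times$~\emph{multiplication}; there is no residual differential operator in $k$ to ``trade against'' the resolvent. All differentiations have already been resolved into iterated $D_v$-derivatives of $w$ and $v_\xi$ (which give scalar multipliers) and $x^\gamma$-weights on $\bar V$ (which give $\hat V^{(\gamma)}$ inside the convolution). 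In (iii) what actually happens is that the resolvent $(H_0(\xi)+\ii)\inv$ provides the decaying multiplier $j_p(k)=(1+\jnorm{k}^p)\inv$, which is fed into Young's (resp.\ weak Young's/H\"older) inequality together with the $\LL^1$ (resp.\ $\LL^s_w$) bound on $\hat V^{(\gamma_{\beta+b})}$. Second, the free iterated commutator is \emph{already bounded} without invoking $(H_0(\xi)+\ii)\inv$: the Gaussian $e^{-k^2-\xi^2}$ built into $v_\xi$ gives the pointwise bound $\abs{(\ii v_\xi\cdot\nabla)^n\omega_\xi(h)}\le C_\xi^nn!\jnorm{h}^{2s_2}e^{-h^2}$ of \eqref{eq:pw_est_omega}, which is globally bounded in $h$, so \eqref{eq:est_it_comm_free} holds for the operator norm itself; the resolvent is not needed to absorb growth here. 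Neither point affects the validity of your conclusion, since the results you invoke are the correct ones, but they do mislocate where the real work happens.
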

\begin{proof}
  This follows directly from Proposition~\ref{prp_CommBoundsFree} and
  Theorem~\ref{thm:interaction}.
\end{proof}
\begin{rem}
  In fact, Theorem~\ref{cor:totalcommutator} is true under a weaker
  assumption which can be found in Theorem~\ref{thm:interaction}. The
  logical structure of the argument is that both
  Condition~\ref{cond:Lsw} and Condition~\ref{cond:L1} imply this weaker
  assumption.
\end{rem}
The following Proposition implies that Condition~\ref{cond:Lsw} is
satisfied for the Yukawa potential whenever $p>1$ and $\frac{3}{2}\le
s<2$.
\begin{prp}\label{prp:yuk}
  Let $d=3$, $V(x)= \frac{\mathrm{e}^{-|x|}}{|x|}$ and $s\ge
  3/2$.
  Then $\hat{V}(k)=4\pi(1+k^2)^{-1}$ and there exists $c>0$ such that
  \begin{align*}
    \forall \alpha\in\N_0^3 : \|\hat{V}^{(\alpha)}\|_{s,w} \leq \alpha!c^{|\alpha|}.
  \end{align*}
\end{prp}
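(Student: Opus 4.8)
The plan is to compute $\hat V$ explicitly, reduce the weak‐norm bounds to a pointwise estimate on the derivatives of $(1+|k|^2)^{-1}$, and obtain that pointwise estimate from Cauchy's integral formula applied to the holomorphic extension of $(1+|k|^2)^{-1}$ to a polydisc whose polyradius is of size $\langle k\rangle$. The value of $\hat V$ is classical: passing to spherical coordinates with polar axis along $k$ and using $\int_0^\infty\mathrm{e}^{-r}\sin(br)\,\mathrm{d}r=b/(1+b^2)$ gives $\int_{\R^3}\frac{\mathrm{e}^{-|x|}}{|x|}\mathrm{e}^{-\ci k\cdot x}\,\mathrm{d}x=\frac{4\pi}{1+|k|^2}$, i.e.\ $\hat V(k)=4\pi(1+k^2)^{-1}$.

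Now the hypothesis $s\ge 3/2$ enters exactly once: since $d=3$, the inequality $s\ge 3/2$ is the same as $d/s\le 2$, so $|k|^{-2}$, and hence $\langle k\rangle^{-2}$, lies in $\LL^s_w(\R^3)$; put $K:=\norm{\langle k\rangle^{-2}}_{s,w}<\infty$. Because $\langle k\rangle\ge1$ we also have $\langle k\rangle^{-m}\le\langle k\rangle^{-2}$, so $\norm{\langle k\rangle^{-m}}_{s,w}\le K$ for every $m\ge2$. Hence, by quasi‑monotonicity of the weak norm under pointwise domination, it suffices to produce constants $C>0$ and $c_0\in(0,1)$, independent of $\alpha$ and $k$, with
\begin{equation*}
  \abs{\hat V^{(\alpha)}(k)}\le C\,\alpha!\,c_0^{-|\alpha|}\,\langle k\rangle^{-2},\qquad k\in\R^3,\ \alpha\in\N_0^3,
\end{equation*}
for then $\norm{\hat V^{(\alpha)}}_{s,w}\le CK\,\alpha!\,c_0^{-|\alpha|}$, and absorbing the constant $CK$ into the base of the power gives the claim.

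For the pointwise bound, extend $f(k):=(1+|k|^2)^{-1}$ to the holomorphic function $f(z)=(1+z_1^2+z_2^2+z_3^2)^{-1}$ on $\{1+\sum_jz_j^2\ne0\}$, so that $\partial^\alpha f$ is given by Cauchy's formula. The geometric core of the argument is the claim that for a sufficiently small $c_0>0$, on the polydisc $P_k:=\{z\in\C^3:\abs{z_j-k_j}\le c_0\langle k\rangle,\ j=1,2,3\}$ one has $\Re\bigl(1+\sum_jz_j^2\bigr)\ge\tfrac18\langle k\rangle^2$, and therefore $\abs{f(z)}\le 8\langle k\rangle^{-2}$ on $P_k$. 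Indeed, writing $z_j=k_j'+\ci\zeta_j$ with $k_j',\zeta_j\in\R$, one has $\Re(1+\sum_jz_j^2)=1+\abs{k'}^2-\abs{\zeta}^2$, and the constraints $\abs{k'-k}\le\sqrt3\,c_0\langle k\rangle$ and $\abs{\zeta}\le\sqrt3\,c_0\langle k\rangle$ force (for $c_0$ small) $1+\abs{k'}^2\ge\tfrac14\langle k\rangle^2$ — treating $\abs k\ge1$ via $\langle k\rangle\le\sqrt2\,\abs k$ and $\abs k\le1$ via $\langle k\rangle^2\le2$ — while $\abs\zeta^2$ stays $\le\tfrac18\langle k\rangle^2$. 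Cauchy's formula for the polydisc with contours $\abs{z_j-k_j}=c_0\langle k\rangle$ then yields
\begin{equation*}
  \abs{\partial^\alpha f(k)}=\Bigl|\frac{\alpha!}{(2\pi\ci)^3}\oint\!\cdots\!\oint\frac{f(z)\,\mathrm{d}z_1\,\mathrm{d}z_2\,\mathrm{d}z_3}{\prod_{j=1}^3(z_j-k_j)^{\alpha_j+1}}\Bigr|\le\alpha!\,\bigl(\sup_{P_k}\abs f\bigr)\,(c_0\langle k\rangle)^{-|\alpha|}\le 8\,\alpha!\,c_0^{-|\alpha|}\langle k\rangle^{-2-|\alpha|},
\end{equation*}
and since $\hat V=4\pi f$ and $\langle k\rangle^{-2-|\alpha|}\le\langle k\rangle^{-2}$ this is the bound required above — in fact with the extra decay $\langle k\rangle^{-|\alpha|}$ to spare.

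I expect the only genuinely delicate step to be the uniform choice of $c_0$, i.e.\ keeping $\Re(1+\sum_jz_j^2)$ comparable to $\langle k\rangle^2$ on $P_k$ simultaneously for small and for large $\abs k$; everything else is a standard integral, a one‑line membership fact for weak $\LL^s$, and a textbook Cauchy estimate. An alternative, more hands‑on route expands $\partial^\alpha(1+|k|^2)^{-1}$ as $\sum_jP_{\alpha,j}(k)(1+|k|^2)^{-1-j}$ with $P_{\alpha,j}$ homogeneous of degree $2j-|\alpha|$ and estimates term by term, but controlling the number of terms together with their numerical coefficients there tends to produce $(\alpha!)^2$ rather than $\alpha!$, whereas the Cauchy estimate yields the single factorial for free, which is why I would prefer it.
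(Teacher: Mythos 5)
Your argument is correct, and it reaches the conclusion by a genuinely different route, even though both proofs rest on Cauchy's formula in a polydisc. You scale the polyradius like $\langle k\rangle$, verify $\Re\bigl(1+\sum_j z_j^2\bigr)\geq\tfrac18\langle k\rangle^2$ there, and thus obtain the pointwise bound $|\hat V^{(\alpha)}(k)|\lesssim\alpha!\,c_0^{-|\alpha|}\langle k\rangle^{-2-|\alpha|}$; the weak-norm bound then follows from monotonicity of $\|\cdot\|_{s,w}$ under pointwise domination together with the single membership fact $\langle k\rangle^{-2}\in\LL^s_w(\R^3)$ for $s\geq 3/2$. The paper instead fixes a polyradius $r<1$ independent of $k$ (so that the polydisc stays inside the strip of analyticity of $\hat V$), which yields only $|\hat V^{(\alpha)}(k)|\lesssim\alpha!\,r^{-|\alpha|}\bigl(1+|r-|k||^2\bigr)^{-1}$ with no extra $|\alpha|$-dependent decay, and then computes the weak norm of this bound by hand: the superlevel set $\{|\hat V^{(\alpha)}|>\beta\}$ is estimated by a ball of explicit radius and the supremum over $\beta$ is taken, which is where $s\geq3/2$ enters as the finiteness of $M_s=\sup_{\beta\in(0,1)}\beta^{1-3/(2s)}\bigl((1-\beta)^{1/2}+r\beta^{1/2}\bigr)^{3/s}$. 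Your variable-radius version is more modular and gives a stronger pointwise estimate as a by-product, at the cost of a one-time uniform real-part check on the $k$-scaled polydisc; the paper's fixed-radius version avoids any $k$-dependent geometry but pays with a more explicit superlevel-set computation at the end. Both arguments, as written, are slightly loose at $\alpha=0$ (the stated bound then reads $\|\hat V\|_{s,w}\leq1$, whereas each proof produces only a fixed multiplicative constant), but this is harmless for the intended application in Theorem~\ref{thm:interaction}.
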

\begin{rem}
  Some authors use the Fourier transform $\hat{V}(k)=4\pi(1+k^2)^{-1}$
  to \emph{define} the Yukawa potential in other dimensions. In that
  case, a similar result holds in these dimensions.
\end{rem}
\begin{rem}\label{rem:EMR}
  In \cite{EMR}, the conclusion of Theorem~\ref{cor:totalcommutator}
  is (indirectly) reached by completely different arguments for the
  class of Hamiltonians satisfying the following condition in addition
  to Condition~\ref{cond_ex}:
  \begin{itemize}
  \item\label{cond_ex_V} Let $d' = 2[d/2]+2$. We suppose that $V\in
    C^{d'}(\R^d)$ and there exists $a>0$, such that for all
    $\alpha\in\N_0^d$ with $|\alpha|\leq d'$, we have $\sup_{x\in
      \R^d}\mathrm{e}^{a|x|}|\partial^\alpha_x V(x)|<\infty$.
  \end{itemize}
  It is easy to see that for dispersion relations and potentials
  satisfying these conditions, Condition~\ref{cond:L1} is also
  satisfied. Indeed, one can prove that for
  some $a>0$, $\hat V$ has an analytic continuation to the
  $d$-dimensional strip $S_{a}^d$ and
  \begin{equation*}
    \forall k\in S_{a}^d:\qquad \bigl|\widehat{V}(k)\bigr| \leq C_V \bigl(1+|k|^{d'}\bigr)^{-1},
  \end{equation*}
  see Remark~3.4.4 in \cite{EMR}.  Then, in the case $d=1$,
  \begin{align*}
    \frac{2\pi}{n!}\norm{\hat V^{(n)}}_1&\le \int\Bigl(\int_{\Gamma_r}\frac{\abs{\hat V(w)}}{\abs{w-z}^{n+1}}\dd w\Bigr) \dd z\\
    &\le\int\Bigl(\int_{\tilde\Gamma_r}\frac{ C_V}{1+\abs{r-\abs{z}}^2}\frac{1}{r^{n+1}}\dd w\Bigr)\dd z\\
    &=\frac{2C_V}{r^n}\int\frac{\dd z}{1+\abs{r-\abs{z}}^2},
  \end{align*}
  where $\Gamma_r$ is a path around $z$ with radius $r<a$ and
  $\tilde\Gamma_r$ is a path around $0$. The case $d>1$ is similar.
\end{rem}

In order to apply the abstract spectral deformation theory of
\cite{EMR}, in addition to Theorem~\ref{cor:totalcommutator}, we need
to assume a few extra conditions which are clearly satisfied for the
Yukawa potential and $p>1$, namely relative compactness of the
interaction term and the existence of a Mourre estimate.
\begin{thm}\label{thm-ex} 
  Suppose Condition~\ref{cond_ex} and \emph{either}
  Condition~\ref{cond:Lsw} \emph{or} Condition~\ref{cond:L1}. Assume
  that $T_0=S_{\check V}$ is relatively bounded wrt.\ $H_0$ and that for
  all $\xi\in\R^d$ and all $\lambda\in\R$, there exists positive
  constants $e,C,\kappa>0$ and a compact operator $K$ such that
  \begin{equation*}
    [H(\xi),\ii A_\xi]\ge e-CE_{H(\xi)}(\R\setminus[\lambda-\kappa,\lambda+\kappa])\jnorm{H(\xi)}-K.
  \end{equation*}
  Let $\Sigma_\mathrm{pp}$ be the joint energy-momentum point spectrum
\[
\Sigma_\mathrm{pp} = \bigl\{(\lambda,\xi)\in\R\times\R^d \, \big|\, \lambda\in \Sigma_\mathrm{pp}(\xi)\bigr\},\quad 
\Sigma_\mathrm{pp}(\xi) = \sigma_\mathrm{pp}(H(\xi))
\]
and $\mathcal{T}$ the energy-momentum threshold set
\[
\begin{aligned}
\mathcal T & = \bigl\{ (\lambda,\xi)\in\R\times\R^d \, \big|\, \lambda\in\mathcal T(\xi)\bigr\},\\
\mathcal T(\xi) & = \bigl\{ \lambda\in\R\,\big|\, \exists k\in\R^d: \ \omega_\xi(k) = \lambda \textup{ and } \nabla_k \omega_\xi(k) = 0\bigr\}.
\end{aligned}
\] Let $(\lambda_0,\xi_0)\in\Sigma_{\textup{pp}}\setminus\mathcal{T}$. If we fix
$v\in\R^d$ with $\norm{v}=1$, then there exist
\begin{itemize}
\item $r,\rho>0$
\item natural numbers $0\leq m_\pm\leq n_0$ and $n_1^\pm,\dotsc,n_{m_\pm}^\pm\geq 1$ 
with $n_1^\pm+\cdots +n_{m_\pm}^\pm \leq n_0$,
\item   real analytic functions
$\lambda_1^\pm,\dotsc,\lambda_{m_\pm}^\pm\colon I_\pm \to \R$,  $I_- = (-r,0)$ and $I_+=(0,r)$,
\end{itemize}
such that
\begin{enumerate}
\item for any $j$, $\lim_{t\to0\pm}\lambda_j^\pm(t)=\lambda_0$,
\item for any $t\in I_\pm$, we have
$\sigma_\mathrm{pp}(H(\xi_0+tv))\cap (\lambda_0-\rho,\lambda_0+\rho) = \{\lambda_1^\pm(t),\dotsc,\lambda^\pm_{m_\pm}(t)\}$,
\item The eigenvalue branches
$I_\pm\ni t\to \lambda_j^\pm(t)$ have constant multiplicity $n_j^\pm$.
\end{enumerate}
\end{thm}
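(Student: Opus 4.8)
The plan is to verify the hypotheses of the abstract spectral-deformation and analytic-perturbation result \cite[Theorem~3.2]{EMR} for the family $\{(H_V(\xi),A_\xi)\}_{\xi\in\R^d}$ near the base point $(\lambda_0,\xi_0)$, and then invoke it directly. Apart from the Mourre estimate, that framework asks for: (i) that $\xi\mapsto H_V(\xi)$ extends to a holomorphic family of type (A) (in the sense of Kato) on a complex neighbourhood of $\R^d$, with $\xi$-independent operator domain; (ii) that each $A_\xi$ is self-adjoint, generates a $C_0$-group $\mathrm{e}^{\ii\theta A_\xi}$, and that $\xi\mapsto A_\xi$ is suitably analytic; (iii) the iterated-commutator bound $\norm{\ad_{A_\xi}^k(H_V(\xi))(H_V(\xi)+\ii)\inv}\le C(\xi)^k k!$ with $C(\xi)$ locally bounded; and (iv) relative compactness of the interaction term, so that $\sigma_{\mathrm{ess}}(H_V(\xi))$ does not depend on $\xi$.

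Items (iii) and (iv) are essentially at hand. For (iii), Theorem~\ref{cor:totalcommutator} supplies the estimate with $(H_0(\xi)+\ii)\inv$ in place of $(H_V(\xi)+\ii)\inv$ and with $C(\xi)$ continuous; composing on the right with the bounded operator $(H_0(\xi)+\ii)(H_V(\xi)+\ii)\inv=\id-S_{\check V}(H_V(\xi)+\ii)\inv$, whose norm remains bounded for $\xi$ near $\xi_0$ because $S_{\check V}$ is $H_0$-bounded and $H_V(\xi)$ varies continuously, turns this into the bound with $(H_V(\xi)+\ii)\inv$ and a constant that is again locally bounded near $\xi_0$. Local boundedness of this constant is exactly what the abstract theorem needs so that the half-width $r_\xi$ of the strip $\abs{\Im\theta}<r_\xi$ to which the conjugated family $\theta\mapsto\mathrm{e}^{\ii\theta A_\xi}H_V(\xi)\mathrm{e}^{-\ii\theta A_\xi}$ extends holomorphically stays bounded below near $\xi_0$. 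Item~(iv) and the Mourre estimate are the additional hypotheses of the statement (for the potentials of interest, e.g.\ the Yukawa potential with $p>1$, the interaction is in fact $H_0$-compact).

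For (i), write $\omega_\xi(k)=\omega_1(\xi-k)+\omega_2(k)$; by Condition~\ref{cond_ex} the functions $\omega_1,\omega_2$ continue analytically to the strip $S^d_{2R}$, so $\xi$ may range over a complex neighbourhood of $\R^d$, and the bounds \eqref{eq_omega}---which in particular force $\omega_\xi$ to be bounded below and comparable to $\langle k\rangle^p$---show that $M_{\omega_\xi}$, defined on the $\xi$-independent domain $\{f\in\sid:\langle k\rangle^p f\in\sid\}$, is a holomorphic family of type (A); adding the $\xi$-independent, $H_0$-bounded operator $S_{\check V}$ preserves this. For (ii), the vector field $v_\xi(k)=\mathrm{e}^{-k^2-\xi^2}(\nabla_k\omega_\xi)(k)$ of \eqref{eq_vectorfield} is, by the Gaussian weight together with \eqref{eq_omega}, smooth with all derivatives bounded uniformly for $\xi$ in compact sets and analytic in $\xi$; hence the self-adjoint operator $A_\xi$ of \eqref{eq_A} generates the $C_0$-group associated with the complete flow of $v_\xi$, and $\xi\mapsto A_\xi$ has the regularity demanded in \cite{EMR}.

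With (i)--(iv) in place, \cite[Theorem~3.2]{EMR} applies at $(\lambda_0,\xi_0)\in\Sigma_{\mathrm{pp}}\setminus\mathcal{T}$ with the direction $v$, and its conclusion is precisely the existence of $r,\rho>0$, of the natural numbers $m_\pm$ and $n^\pm_j$, and of the real-analytic branches $\lambda^\pm_j$ with the stated limits, exhaustion of the local point spectrum, and constant multiplicities. The main obstacle is not a single hard estimate---the analytic heavy lifting has already been carried out in Theorem~\ref{cor:totalcommutator}---but the careful bookkeeping required to match the present setup, in which the conjugate operator $A_\xi$ varies with $\xi$, to the abstract hypotheses of \cite{EMR}: in particular, establishing the holomorphic type-(A) property in $\xi$ simultaneously with the locally uniform $A_\xi$-analyticity, and checking that the energy-localised Mourre estimate assumed here is of the form required there.
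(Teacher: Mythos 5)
Your approach is the same one the paper takes: the theorem is established by verifying the hypotheses of the abstract result \cite[Theorem~3.2]{EMR} — type-(A) analyticity of $\xi\mapsto H_V(\xi)$, regularity of $A_\xi$, the iterated-commutator bound from Theorem~\ref{cor:totalcommutator} together with the $H_0$-relative boundedness of $S_{\check V}$ to pass from $(H_0(\xi)+\ii)^{-1}$ to $(H_V(\xi)+\ii)^{-1}$, and the assumed Mourre estimate — and then invoking that theorem directly. The paper does not spell out these checks (it states only that the conclusion follows as in \cite{EMR}), so your proposal is a faithful, slightly more explicit account of the intended argument.
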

This theorem, with the assumptions replaced by just
Condition~\ref{cond_ex} and the condition mentioned in
Remark~\ref{rem:EMR}, is similar to \cite[Theorem~3.2]{EMR}. This
result, however, doesn't cover the Yukawa potential due to the
singularity at $x=0$.
\section{Terminology and notation}
\label{sec:notation}
In this section, we introduce some notation which we use to state and
prove Theorem~\ref{thm:commutatorformula}. The main objects are the
\emph{polyindices} which we usually denote $\alpha$, $\beta$, $a$ or
$b$ (see below), and the two classes of operators indexed by the
polyindices; the multiplication operators $M_{\alpha, \beta}$ and the
convolution operators $T_{\beta, b}$.

In the following, $C_0(\N_0,\N_0^d)$ will denote the set of
non-negative, integer valued functions on $\N_0$ which are zero except
on a finite set. Such functions will be referred to as
\emph{polyindices} of dimension $d$. We will usually use the letters
$\alpha$ and $a$ for $1$-dimensional polyindices, while $\beta$ and
$b$ are reserved for $d$-dimensional polyindices. For any polyindex
$\beta\in C_0(\N_0,\N_0^d)$ of dimension $d$, we will call the finite
number
\begin{equation*}
\norm{\beta}=\sum_{\sigma=1}^d\sum_{i=0}^\infty\beta_\sigma(i)(i+1)
\end{equation*}
the \emph{order} of $\beta$, and the finite number
\begin{equation*}
  \abs{\beta}=\sum_{\sigma=1}^d\sum_{i=0}^\infty\beta_\sigma(i)
\end{equation*}
will be called the \emph{size} of $\beta$. 
The \emph{order factorial} of $\beta$ will be written as, and defined
by,
\begin{equation*}
  \beta\ddot{!}=\prod_{\sigma=1}^d\prod_{i=0}^\infty\beta_\sigma(i)!((i+1)!)^{\beta_\sigma(i)}\,,
\end{equation*}
and is likewise a finite number for any polyindex $\beta$ (all factors
except a finite number are $1$). The \emph{reduced order factorial} of
$\beta$ is then
\begin{equation*}
  \beta\dot{!}=\prod_{\sigma=1}^d\prod_{i=0}^\infty\beta_\sigma(i)!(i+1)^{\beta_\sigma(i)}\,.
\end{equation*}
We will sometimes need the ratio
$\beta\bar!:=\frac{\beta\ddot!}{\beta\dot!}=\prod_{\sigma=1}^d\prod_{i=0}^\infty
(i!)^{\beta_\sigma(i)}$.
\begin{rem}\label{rem_size_of_set}
  At a later point in the paper we will need the following result
  which is easily checked to be true.
  \begin{align*}
    |\{\alpha\in\mathrm{C}_0(\N_0,\N_0)~|~\|\alpha\|=k\}| = p(k),
  \end{align*}
  where $p(k)$ is the number theoretic partition function (see e.g.\
  \cite{HardyWright}) and $|M|$ denotes the number of elements of a
  set $M$. Indeed, if $\norm{\alpha}=k$, then $\alpha$ determines a
  unique partition of $k$ in the following way:
  \begin{equation*}
    k=\norm{\alpha}=\sum_{i=0}^\infty\sum_{j=1}^{\alpha(i)}(i+1),
  \end{equation*}
  and if a partition of $k$ is given, it can be uniquely encoded in an
  $\alpha$ with $\norm{\alpha}=k$ by letting $\alpha(i-1)$ denote the
  number of $i$'s in the partition for all $i\ge1$. A simple upper
  bound for $p(k)$ is $p(k)<e^{\pi\sqrt{2k/3}}=e^{c\sqrt{k}}$, cf.\
  \cite{HardyWright}.
\end{rem}

The polyindices will be used to index certain operators that appear in
the commutator formula for the interaction. More specifically, let
$\alpha$ and $\beta$ be a polyindices of dimension $1$ and $d$,
respectively, and let $f\in C^\infty(\R,\R)$ and
$g\in C^\infty(\R^d,\R)$. Write
\begin{align*}
  \D_vf^\alpha&=\prod_{i=0}^\infty(D_v^if)^{\alpha(i)}&\textup{and}&&\D_vg^\beta&=\prod_{i=0}^\infty\prod_{\sigma=1}^d(D_v^ig_\sigma)^{\beta_\sigma(i)}&\textup{where}&&D_v&=\ii v_\xi\cdot\nabla.
\end{align*}
We define
\begin{equation*}
  M_{\alpha,\beta}:=M_{\D_vw^\alpha\D_vv_\xi^\beta},
\end{equation*}
where $w=\frac{\ii}{2}\div(v_\xi)$ and, for a function $f$, $M_f$
denotes the operator of multiplication with $f$. We note that
$M_{0,0}=\id$ and that $M_{\alpha,\beta}M_{a,b}=M_{\alpha+a,\beta+b}$.

The commutator formula also contains some convolution operators which
are indexed by polyindices. For any pair of $d$-dimensional
polyindices $\beta$ and $b$, we now let the $d$-dimensional multiindex
$\gamma_{\beta+b}$ be defined through
\begin{equation}
  \label{eq:gammabetab}
  (\gamma_{\beta+b})_\sigma=\abs{\beta_\sigma}+\abs{b_\sigma}.
\end{equation}
Then
\begin{equation*}
  T_{\beta, b}:=S_{((-1)^{\abs{b}}(-x)^{\gamma_{\beta+b}} \bar V)^\wedge},
\end{equation*} 
where $S_f$ denotes the operator of convolution with the function $f$
and $\bar V(x)=V(-x)$. Note that, although only $\hat V$ is assumed to
be well-defined, $T_{\beta,b}$ can be interpreted as a form on
$C_0^\infty(\R^d)$ for all $\beta$ and $b$. Observe that
$S_{\hat{\bar{V}}}=T_0=T_{{0,0}}$.

In the proof of the commutator formula, we will make extensive use of
the following notation. If $\gamma\in\N_0^d$ is a multiindex and
$\sigma\in\{1,\dotsc,d\}$, then we let
\begin{equation*}
  \gamma^{+(\sigma)}=\gamma+\delta_\sigma,
\end{equation*}
where $\delta_\sigma$ is given by
$(\delta_\sigma)_{\sigma'}=\delta_{\sigma,\sigma'}$, where
$\delta_{\sigma,\sigma'}$ is the Kronecker delta. For a
$1$-dimensional polyindex $\alpha\in C_0(\N_0,N_0)$, we define
\begin{equation*}
  \alpha^{+(i)}=\alpha+\delta_i,
\end{equation*}
where $\delta_i(j)=\delta_{i,j}$. If $\beta\in
C_0(\N_0,\N_0^d)$ is a $d$-dimensional polyindex, then 
\begin{equation*}
  \beta^{+(i,\sigma)}=\beta+\delta_{(i,\sigma)},
\end{equation*}
where
$(\delta_{(i,\sigma)}(j))_{\sigma'}=\delta_{i,j}\delta_{\sigma,\sigma'}$. Likewise,
we will write
\begin{align*}
  \alpha^{-(i)}&=\alpha-\delta_i
&\textup{ and } &&
  \beta^{-(i,\sigma)}&=\beta-\delta_{(i,\sigma)},
\end{align*}
whenever $\alpha(i),\beta_\sigma(i)\ge1$. If $\alpha(i)$ or
$\beta_\sigma(i)$ is $0$, then $\alpha^{-(i)}$ respectively
$\beta^{-(i,\sigma)}$ can be given any ($1$- respectively
$d$-dimensional poly\-index) value; such cases will only appear in
expressions that are multiplied by $0$.

\section{The commutator formula for the interaction term}
\label{sec:commutatorformula}
\begin{thm}
  \label{thm:commutatorformula}
  Let $T_0=S_{\check V}$ denote the interaction term. Then
  \begin{equation}\label{eq:commutatorformula}
    \ad_A^k(T_0)=\smashoperator{\sum_{\substack{\alpha,\beta,a,b\\\norm{\alpha}+\norm{\beta}+\norm{a}+\norm{b}=k}}}\,\frac{k!}{\alpha\ddot!\beta\ddot!a\ddot!b\ddot!} M_{\alpha,\beta}T_{\beta,b} M_{a,b}
  \end{equation}
  as a form identity on $C_0^\infty(\R^d)$.
\end{thm}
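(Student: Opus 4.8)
\emph{Approach.} I would argue by induction on $k$, keeping the paper's convention $\ad_A(X)=[X,A]$. The base case $k=0$ is immediate: the only tuple of order $0$ is $(\alpha,\beta,a,b)=(0,0,0,0)$, and the corresponding term is $\frac{0!}{1}M_{0,0}T_{0,0}M_{0,0}=T_0$. For the inductive step I would write $A=D_v+M_w$ with $D_v=\ii v_\xi\cdot\nabla$ and $w=\frac{\ii}{2}\div(v_\xi)$, and evaluate $\ad_A^{k+1}(T_0)=[\ad_A^k(T_0),A]$ by applying the Leibniz rule
\begin{equation*}
[M_{\alpha,\beta}T_{\beta,b}M_{a,b},A]=[M_{\alpha,\beta},A]\,T_{\beta,b}M_{a,b}+M_{\alpha,\beta}[T_{\beta,b},A]\,M_{a,b}+M_{\alpha,\beta}T_{\beta,b}\,[M_{a,b},A]
\end{equation*}
to every summand of the formula for $\ad_A^k(T_0)$.

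\emph{The elementary commutators.} The whole computation rests on two short lemmas, both proven as form identities on $C_0^\infty(\R^d)$ by a single integration by parts. First, for a smooth function $f$ one has $[M_f,A]=-M_{D_vf}$; since $D_v$ is first-order it acts as a derivation on products, and hence $[M_{\alpha,\beta},A]=-\sum_i\alpha(i)M_{(\alpha^{-(i)})^{+(i+1)},\beta}-\sum_{i,\sigma}\beta_\sigma(i)M_{\alpha,(\beta^{-(i,\sigma)})^{+(i+1,\sigma)}}$, and symmetrically for $[M_{a,b},A]$. Second, for a convolution operator $S_g$ one has $[S_g,A]=\ii\sum_\sigma[S_{\partial_\sigma g},M_{v_{\xi,\sigma}}]-S_gM_w-M_wS_g$. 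The decisive observation is that, since the Fourier transform interchanges $\partial_\sigma$ with multiplication by $-\ii x_\sigma$, differentiating the kernel of $T_{\beta,b}$ in the $\sigma$-th variable yields $\pm\ii$ times the kernel of $T_{\beta^{+(0,\sigma)},b}$ (equivalently $\mp\ii$ times that of $T_{\beta,b^{+(0,\sigma)}}$). Inserting this, and using $M_{v_{\xi,\sigma}}=M_{0,\delta_{(0,\sigma)}}$, $M_w=M_{\delta_0,0}$ together with $M_{\alpha,\beta}M_{a,b}=M_{\alpha+a,\beta+b}$, one finds that $M_{\alpha,\beta}[T_{\beta,b},A]M_{a,b}$ is a combination of $M_{\alpha,\beta^{+(0,\sigma)}}T_{\beta^{+(0,\sigma)},b}M_{a,b}$, $M_{\alpha,\beta}T_{\beta,b^{+(0,\sigma)}}M_{a,b^{+(0,\sigma)}}$, $M_{\alpha^{+(0)},\beta}T_{\beta,b}M_{a,b}$ and $M_{\alpha,\beta}T_{\beta,b}M_{a^{+(0)},b}$.

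\emph{Shape and bookkeeping.} Combining the three pieces, every term produced is a scalar multiple of an operator of the shape $M_{\alpha',\beta'}T_{\beta',b'}M_{a',b'}$, and because each of the elementary moves raises exactly one of $\norm{\alpha},\norm{\beta},\norm{a},\norm{b}$ by $1$, its total order is $k+1$. Two points need care. When a term arises in which the inner index of the $T$ does not match the adjacent index of a neighbouring $M$ — for instance after differentiating a $v_\xi$-factor inside $M_{\alpha,\beta}$, which shifts the $\beta$ of the $M$ but not of the $T$ — one simply relabels $T_{\beta,b}$; this is legitimate because $T_{\beta,b}$ depends on $(\beta,b)$ only through $\gamma_{\beta+b}$ and the parity of $\abs{b}$, and both are preserved by the shifts $\beta\mapsto(\beta^{-(i,\sigma)})^{+(i+1,\sigma)}$ and $b\mapsto(b^{-(i,\sigma)})^{+(i+1,\sigma)}$. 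Moreover one checks that each term of $\ad_A^{k+1}(T_0)$ is produced by exactly one elementary move applied to exactly one summand of $\ad_A^k(T_0)$, so that collecting like terms is unambiguous.

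\emph{The combinatorial identity, and the main obstacle.} It remains to verify that, for a fixed target tuple $(\alpha',\beta',a',b')$ of order $k+1$, the coefficients of all terms mapping to $M_{\alpha',\beta'}T_{\beta',b'}M_{a',b'}$ sum to $\frac{(k+1)!}{\alpha'\ddot{!}\,\beta'\ddot{!}\,a'\ddot{!}\,b'\ddot{!}}$. Since $\frac{k!}{\alpha\ddot{!}\,\beta\ddot{!}\,a\ddot{!}\,b\ddot{!}}$ factorizes over the four polyindices and $k+1=\norm{\alpha'}+\norm{\beta'}+\norm{a'}+\norm{b'}$, this reduces to four independent one-block identities, the $\alpha$-block one reading $\sum_{\alpha\to\alpha'}(\text{integer coefficient of the move})\cdot\frac{\alpha'\ddot{!}}{\alpha\ddot{!}}=\norm{\alpha'}$, and similarly for $\beta',a',b'$. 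Using the explicit product formulas for the order factorial, a shift move $\alpha\mapsto(\alpha^{-(i)})^{+(i+1)}$ contributes $\alpha'(i+1)(i+2)$ and the creation move $\alpha\mapsto\alpha^{+(0)}$ contributes $\alpha'(0)$, and $\sum_{j\ge0}\alpha'(j)(j+1)=\norm{\alpha'}$ closes the induction. I expect the genuine difficulty to be exactly this bookkeeping: keeping every sign consistent through the integration by parts, through the $\ii$'s hidden in $D_v$ and $w$, and through the Fourier identity, and making sure the $T$-relabeling creates no spurious terms, so that the one-block sums emerge with the correct sign and value. The rest — the induction scaffolding, the two elementary commutators, and the evaluation of the order-factorial sums — is routine.
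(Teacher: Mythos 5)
Your proposal is correct and takes essentially the same approach as the paper: induction on $k$, a Leibniz-rule decomposition of $[\,\cdot\,,A]$ into the three constituents $M_{\alpha,\beta}$, $T_{\beta,b}$, $M_{a,b}$, the two elementary commutators $[M_f,A]=-M_{D_vf}$ and the integration-by-parts identity for $[S_g,A]$, the relabeling of $T_{\beta,b}$ via the invariance under $\beta\mapsto(\beta^{-(i,\sigma)})^{+(i+1,\sigma)}$ and $b\mapsto(b^{-(i,\sigma)})^{+(i+1,\sigma)}$, and finally the per-block telescoping identity $\alpha'(0)+\sum_{i\ge0}\alpha'(i+1)(i+2)=\norm{\alpha'}$ which, summed over the four blocks, closes the induction as $\norm{\alpha}+\norm{\beta}+\norm{a}+\norm{b}=k+1$. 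The individual computations you cite (including the falling from $\ii\partial_\sigma$ onto the kernel being $\mp$ a $T$ with shifted index, and the coefficient $\alpha'(i+1)(i+2)$ for the shift move) agree with the paper's.
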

\begin{proof}
  The proof goes by induction. The case $k=0$ is trivially true. Consider 
  \begin{equation*}
    [A,M_{\alpha,\beta}T_{\beta,b}M_{a,b}]=[A,M_{\alpha,\beta}]T_{\beta,b}M_{a,b}+M_{\alpha,\beta}[A,T_{\beta,b}]M_{a,b}+M_{\alpha,\beta}T_{\beta,b}[A,M_{a,b}],
  \end{equation*}
  which here and for the rest of this section should be read as form
  identities on $C_0^\infty(\R^d)$. We thus need to find
  $[A,M_{\alpha,\beta}]$ and $[A,T_{\beta,b}]$:
  \begin{equation*}
    [A,M_{\alpha,\beta}]=\sum_{i=0}^\infty\Bigl(\alpha(i)M_{\alpha^{-(i)+(i+1)},\beta}+\sum_{\sigma=1}^d\beta_\sigma(i)M_{\alpha,\beta^{-(i,\sigma)+(i+1,\sigma)}}\Bigr),
  \end{equation*}
while
\begin{equation*}
  (AT_{\beta,b}f)(k)=(M_{w_\xi}T_{\beta,b}f)(k)+\sum_{\sigma=1}^d(M_{{(v_\xi)}_\sigma}T_{\beta^{+(0,\sigma)},b}f)(k)
\end{equation*}
and
\begin{equation*}
  (T_{\beta,b}Af)(k)=-(T_{\beta,b}M_{w_\xi}f)(k)-\sum_{\sigma=1}^d(T_{\beta,b^{+(0,\sigma)}}M_{{(v_{\xi})}_\sigma}f)(k),
\end{equation*}
so
  \begin{equation*}
    [A,T_{\beta,b}]=M_{\delta_0,0}T_{\beta,b}+T_{\beta,b}M_{\delta_0,0}+\sum_{\sigma=1}^d(M_{0,\delta_{(0,\sigma)}}T_{\beta^{+(0,\sigma)},b}+T_{\beta,b^{+(0,\sigma)}}M_{0,\delta_{(0,\sigma)}}).
  \end{equation*}
  Putting this together (and noting that
  $T_{\beta,b}=T_{\beta^{-(i,\sigma)+(i+1,\sigma)},b}=T_{\beta,b^{-(i,\sigma)+(i+1,\sigma)}}$), we
  get
  \begin{equation}\label{eq:scheme}
    \begin{split}
      \MoveEqLeft[-2] [A,M_{\alpha,\beta}T_{\beta,b}M_{a,b}]=
      \\
      &\sum_{i=0}^\infty\Bigl(\alpha(i)M_{\alpha^{-(i)+(i+1)},\beta}T_{\beta,b}M_{a,b}+\sum_{\sigma=1}^d\beta_\sigma(i)M_{\alpha,\beta^{-(i,\sigma)+(i+1,\sigma)}}T_{{\beta^{-(i,\sigma)+(i+1,\sigma)},b}}M_{a,b}\Bigr)
      \\
      &+M_{\alpha^{+(0)},\beta}T_{\beta,b}M_{a,b}+M_{\alpha,\beta}T_{\beta,b}M_{a^{+(0)},b}
      \\
      &+\sum_{\sigma=1}^d(M_{\alpha,\beta^{+(0,\sigma)}}T_{{\beta^{+(0,\sigma)},b}}M_{a,b}+M_{\alpha,\beta}T_{{\beta,b^{+(0,\sigma)}}}M_{a,b^{+(0,\sigma)}})
      \\
      &+\sum_{i=0}^\infty\Bigl(a(i)M_{\alpha,\beta}T_{\beta,b}M_{a^{-(i)+(i+1)},b}+\sum_{\sigma=1}^db_\sigma(i)M_{\alpha,\beta}T_{{\beta,b^{-(i,\sigma)+(i+1,\sigma)}}}M_{a,b^{-(i,\sigma)+(i+1,\sigma)}}\Bigr).
    \end{split}
  \end{equation}
  From this expression and the induction start, it is clear that the
  $k$'th iterated commutator is of the form
  \begin{equation}\label{eq:inductionstep}
    \ad_A^k(T_0)=\smashoperator{\sum_{\substack{\alpha,\beta,a,b\\\norm{\alpha}+\norm{\beta}+\norm{a}+\norm{b}=k}}}C_{\alpha,\beta,a,b}^{(k)}M_{\alpha,\beta}T_{\beta,b}M_{a,b},
  \end{equation}
  where $\smash{C_{\alpha,\beta,a,b}^{(k)}}$ are some constants to be
  determined. Assume that the commutator formula holds true for
  $k$. We want to show that it also holds for $k+1$. By the above
  discussion, it is enough to let $\alpha,\beta,a,b$ be arbitrary with
  $\norm{\alpha}+\norm{\beta}+\norm{a}+\norm{b}=k+1$ and show that
  $\smash{C_{\alpha,\beta,a,b}^{(k)}}=\frac{k!}{\alpha\ddot!\beta\ddot!a\ddot!b\ddot!}$,
  so this is what we do.
  
  Using the induction hypothesis we combine \eqref{eq:scheme} and
  \eqref{eq:inductionstep} to obtain an expression for
  $\ad_A^{k+1}(T_0)$. This enables us to identify those terms in the
  $k$'th iterated commutator that through commutation with $A$
  contribute to the term
  $\smash{C_{\alpha,\beta,a,b}^{(k+1)}}M_{\alpha,\beta}T_{\beta,b}M_{a,b}$
  in the $k+1$'st iterated commutator. Before proceeding we illustrate
  this by an example. Suppose we are given $\alpha,\beta,a,b$ with
  $\norm{\alpha}+\norm{\beta}+\norm{a}+\norm{b}=k+1$. One of the terms
  appearing in the combination of \eqref{eq:scheme} and
  \eqref{eq:inductionstep} is
  \begin{align*}
  C_{\alpha',\beta',a',b'}^{(k)}\alpha'(i)M_{\alpha'^{-(i)+(i+1)},\beta'}T_{{\beta',b'}}M_{a',b'},
  \end{align*}
  where
  \begin{align*}
 \|\alpha'\|+\|\beta'\|+\|a'\|+\|b'\|=k.
  \end{align*}
  Since the contributing term from the $k$'th commutator can only have
  one polyindex deviating from $\alpha,\beta,a,b$, this term will contribute to $C_{\alpha,\beta,a,b}^{(k+1)}M_{\alpha,\beta}T_{\beta,b}M_{a,b}$, if $\alpha' =  \alpha^{+(i)-(i+1)}$, $\beta'=\beta$, $a'=a$ and $b'=b$. In the same fashion one easily finds
  that all possible contributors have polyindices of one of the
  following forms:
  \begin{align*}
    &(\alpha^{-(0)},\beta,a,b),&&(\alpha^{-(i+1)+(i)},\beta,a,b),&&(\alpha,\beta^{-(0,\sigma)},a,b),&&(\alpha,\beta^{-(i+1,\sigma)+(i,\sigma)},a,b),\\
    &(\alpha,\beta,a^{-(0)},b),&&(\alpha,\beta,a^{-(i+1)+(i)},b),&&(\alpha,\beta,a,b^{-(0,\sigma)}),&&(\alpha,\beta,a,b^{-(i+1,\sigma)+(i,\sigma)}).
  \end{align*}
  Appealing again to the induction hypothesis and \eqref{eq:scheme}, we see that in our example
  \begin{align*}
  \alpha'(i)  C_{\alpha',\beta',a',b'}^{(k)} =   (\alpha+\delta_i-\delta_{i+1})(i)  C_{\alpha+\delta_i-\delta_{i+1},\beta,a,b}^{(k)}=\frac{\alpha(i)(i+2)k!}{\alpha\ddot!\beta\ddot!a\ddot!b\ddot!}.
  \end{align*}
Proceeding in the same way we see that the contributions of all the previously listed terms are, respectively
  \begin{align*}
    &\frac{\alpha(0)k!}{\alpha\ddot!\beta\ddot!a\ddot!b\ddot!},&&
    \frac{\alpha(i+1)(i+2)k!}{\alpha\ddot!\beta\ddot!a\ddot!b\ddot!},&&\frac{\beta_\sigma(0)k!}{\alpha\ddot!\beta\ddot!a\ddot!b\ddot!},&&\frac{\beta_\sigma(i+1)(i+2)k!}{\alpha\ddot!\beta\ddot!a\ddot!b\ddot!},\\
    &\frac{a(0)k!}{\alpha\ddot!\beta\ddot!a\ddot!b\ddot!},&&
    \frac{a(i+1)(i+2)k!}{\alpha\ddot!\beta\ddot!a\ddot!b\ddot!},&&\frac{b_\sigma(0)k!}{\alpha\ddot!\beta\ddot!a\ddot!b\ddot!},&&\frac{b_\sigma(i+1)(i+2)k!}{\alpha\ddot!\beta\ddot!a\ddot!b\ddot!},
  \end{align*}
  times $M_{\alpha,\beta}T_{\beta,b}M_{a,b}$. Finally, summing up all
  these possible contributions gives us the value of
  $C_{\alpha,\beta,a,b}^{(k+1)}$:
  \begin{align*}
  C_{\alpha,\beta,a,b}^{(k+1)} 
  &= 
  \frac{k!}{\alpha\ddot!\beta\ddot!a\ddot!b\ddot!}\sum_{i=0}^\infty (\alpha(i+1)(i+2) + a(i+1)(i+2)) \\
  &~~~+
  \frac{k!}{\alpha\ddot!\beta\ddot!a\ddot!b\ddot!}\sum_{i=0}^\infty\sum_{\sigma=1}^{d}(\beta_\sigma(i+1)(i+2) + b_\sigma(i+1)(i+2)) \\
  &~~~+
  \frac{k!}{\alpha\ddot!\beta\ddot!a\ddot!b\ddot!}\left(\alpha(0) + a(0) + \sum_{\sigma=1}^{d}(\beta_\sigma(0) + b_\sigma(0))\right)\\
  &=
  \frac{k!}{\alpha\ddot!\beta\ddot!a\ddot!b\ddot!}\left(\|\alpha\|+ \|a\|+\|\beta\|+\|b\| \right) = \frac{(k+1)!}{\alpha\ddot!\beta\ddot!a\ddot!b\ddot!},
  \end{align*}
  where we have used that by assumption $\|\alpha\|+ \|a\|+\|\beta\|+\|b\|=k+1$ in the last line. This completes the proof. 
\end{proof}

\section{Estimates on the reduced order factorial of certain polyindices}
\label{sec:betaell}
To be able to use Theorem~\ref{thm:commutatorformula} and
\eqref{eq:commutatorformula} to estimate $\ad_{A}^k(T_0)$, we need
some control on the reduced order factorials, $\alpha\dot!$,
$\beta\dot!$, etc. In fact, we will just use the trivial estimates
$a\dot!,\alpha\dot!\ge1$ for the reduced order factorial of the
$1$-dimensional polyindices pertaining to the multiplication operators
of the form $M_{\D_vw^\alpha}$.

For the $d$-dimensional polyindices $\beta$ and $b$, we will make a
more careful estimate. More precisely, for each
$\sigma\in\{1,2,\dotsc,d\}$, we want to estimate
\begin{equation}
  \label{eq:factorization1}
  C^{\norm{\beta_\sigma}+\norm{b_\sigma}}\beta_\sigma\dot!b_\sigma\dot!=\prod_{i=1}^\infty\biggl(\bigl(\prod_{y=1}^{\beta_\sigma(i)}C^{i+1}y(i+1)\bigr)\bigl(\prod_{y=1}^{b_\sigma(i)}C^{i+1}y(i+1)\bigr)\biggr)
\end{equation}
from below, where $C>0$ is some constant (see the next section for a
``derivation'' of the identity). We will use the factorization
\eqref{eq:factorization1} and split the estimate into two parts. The
first part is a careful estimate on the product of the smallest factors
in \eqref{eq:factorization1}, the last will be a rough estimate on the
rest of the factors.

The first part part is contained in the following technical lemma, the
rest is postponed to the next section.

\begin{lem}
  \label{lem:betaell}
  For any $C>0$ and $\ell\in\N$, let $\beta^\ell\colon\N_0\to\N_0$
  denote the polyindex given by
  \begin{equation*}
    \beta^\ell(i)=\biggl\lfloor\frac{\ell}{C^i(i+1)}\biggr\rfloor,
  \end{equation*}
  where $\lfloor x\rfloor$ denotes the integer part of $x$.  For any
  $0<\epsilon<1$, there exists a $C_0$ such that for all $C>C_0$ and
  all $\ell$,
  \begin{equation}
    \label{eq:betaellinequality2}
    \abs{\beta^\ell}\le(1+\epsilon)\ell
  \end{equation}
  and
  \begin{equation}
    \label{eq:betaellinequality1}
    C^{2\norm{\beta^\ell}}(\beta^\ell\dot!)^2\ge C''^{2\abs{\beta^\ell}}(2\abs{\beta^\ell})!,
  \end{equation}
  where $C''=C/(4e(1+\epsilon))$.
\end{lem}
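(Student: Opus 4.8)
The strategy is to analyse the polyindex $\beta^\ell$ directly by splitting the range of $i$ into two regimes: a "head" where $C^i(i+1) \le \ell$, so that $\beta^\ell(i) = \lfloor \ell/(C^i(i+1))\rfloor \ge 1$ contributes, and a "tail" where $\beta^\ell(i) = 0$. Let $N = N(\ell,C)$ be the largest $i$ with $C^i(i+1)\le\ell$; note $N \le \log_C \ell$. First I would establish \eqref{eq:betaellinequality2}: since $\abs{\beta^\ell} = \sum_{i=0}^N \lfloor \ell/(C^i(i+1))\rfloor \le \ell \sum_{i=0}^\infty C^{-i}/(i+1) \le \ell \sum_{i=0}^\infty C^{-i} = \ell \cdot \frac{C}{C-1}$, and $\frac{C}{C-1}\to 1$ as $C\to\infty$, we can pick $C_0$ so large that $\frac{C}{C-1}\le 1+\epsilon$ for all $C>C_0$, giving $\abs{\beta^\ell}\le(1+\epsilon)\ell$.

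For \eqref{eq:betaellinequality1}, the left-hand side is
\[
C^{2\norm{\beta^\ell}}(\beta^\ell\dot!)^2 = \prod_{i=0}^N\bigl(C^i(i+1)\bigr)^{2\beta^\ell(i)}(\beta^\ell(i)!)^2,
\]
using $\norm{\beta^\ell}=\sum_i \beta^\ell(i)(i+1)$ and $\beta^\ell\dot! = \prod_i \beta^\ell(i)!(i+1)^{\beta^\ell(i)}$. The idea is that $C^i(i+1)\ge \ell/(\beta^\ell(i)+1)\ge \ell/(2\beta^\ell(i))$ when $\beta^\ell(i)\ge1$ (since $\lfloor x\rfloor \ge x/2$ for $x\ge1$), so each factor $C^i(i+1)$ is comparably large, roughly $\ell/\beta^\ell(i)$. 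Plugging this in, the left-hand side is bounded below by $\prod_{i=0}^N \bigl(\frac{\ell}{2\beta^\ell(i)}\bigr)^{2\beta^\ell(i)}(\beta^\ell(i)!)^2$. Using the lower Stirling bound $m! \ge (m/e)^m$, we get $(\beta^\ell(i)!)^2 \ge (\beta^\ell(i)/e)^{2\beta^\ell(i)}$, so each factor is at least $\bigl(\frac{\ell}{2e}\bigr)^{2\beta^\ell(i)}$, and the whole product is at least $\bigl(\frac{\ell}{2e}\bigr)^{2\abs{\beta^\ell}}$. On the other side, we want to dominate $C''^{2\abs{\beta^\ell}}(2\abs{\beta^\ell})!$ with $C'' = C/(4e(1+\epsilon))$; by the upper Stirling bound $(2m)! \le (2m)^{2m} \le (2(1+\epsilon)\ell)^{2m}$ (using $\abs{\beta^\ell}\le(1+\epsilon)\ell$), we get $C''^{2\abs{\beta^\ell}}(2\abs{\beta^\ell})! \le \bigl(C'' \cdot 2(1+\epsilon)\ell\bigr)^{2\abs{\beta^\ell}} = \bigl(\frac{C\ell}{2e}\bigr)^{2\abs{\beta^\ell}}$. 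Comparing, $\bigl(\frac{\ell}{2e}\bigr)^{2\abs{\beta^\ell}}$ versus $\bigl(\frac{C\ell}{2e}\bigr)^{2\abs{\beta^\ell}}$ — the second is larger by a factor $C^{2\abs{\beta^\ell}}$, which goes the wrong way. So the crude bound $C^i(i+1)\ge\ell/(2\beta^\ell(i))$ loses too much, and the real work is in being more careful.

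The main obstacle, then, is to recover the missing factor: the point is that $C^i(i+1)$ is actually much bigger than $\ell/\beta^\ell(i)$ for most $i$, because the $\beta^\ell(i)$ decay geometrically. More precisely, I would group the factors in $\prod_{i=0}^N \bigl(C^i(i+1)\bigr)^{2\beta^\ell(i)}$ and extract $\prod_i \bigl(C^i\bigr)^{2\beta^\ell(i)} = C^{2\sum_i i\,\beta^\ell(i)}$ separately; one shows $\sum_i i\,\beta^\ell(i)$ is a definite positive fraction of $\abs{\beta^\ell}$ for $C$ large — actually the bulk of $\abs{\beta^\ell}$ sits at small $i$, so this doesn't immediately work either, and the cleanest route is probably to keep the factors $(i+1)^{2\beta^\ell(i)}$ together with $(\beta^\ell(i)!)^2$ and only throw away the genuinely smallest ones, carrying out the "careful estimate on the product of the smallest factors" that the surrounding text advertises. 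In practice I would: (i) fix $\epsilon$, choose $C_0$ large; (ii) for $C > C_0$ split indices into $i \le i_0$ (finitely many, treated exactly, where the $\binom{2\abs{\beta^\ell}}{\cdots}$-type multinomial loss is absorbed) and $i > i_0$ (geometric tail, roughly estimated); (iii) apply the two-sided Stirling bounds together with $\abs{\beta^\ell}\le(1+\epsilon)\ell$ and $\lfloor x\rfloor\ge x/2$; (iv) verify the constant works out to $C/(4e(1+\epsilon))$, the factor $4$ (rather than $2$) in the denominator providing exactly the slack needed to swallow the multinomial/counting losses from regrouping. I expect the bookkeeping in step (ii)–(iv) — tracking exactly where each power of $2$, $e$, and $C$ goes — to be the delicate part, while the conceptual skeleton (geometric decay of $\beta^\ell$, Stirling) is straightforward.
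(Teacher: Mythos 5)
Your proof of \eqref{eq:betaellinequality2} is correct and a touch simpler than the paper's, which integrates against $C^{-x}/(x+1)$ and invokes asymptotics of the exponential integral; your crude geometric-series majorization $\sum_i C^{-i}/(i+1)\le C/(C-1)$ does the same job with less machinery.

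For \eqref{eq:betaellinequality1}, however, there is an algebraic error that derails your argument, and it is exactly the ``missing factor $C^{2|\beta^\ell|}$'' you then spend the rest of the sketch trying to recover. You wrote
\[
C^{2\norm{\beta^\ell}}(\beta^\ell\dot!)^2=\prod_{i}\bigl(C^{i}(i+1)\bigr)^{2\beta^\ell(i)}(\beta^\ell(i)!)^2,
\]
but since $\norm{\beta^\ell}=\sum_i\beta^\ell(i)(i+1)$, the correct factorization carries $C^{i+1}$, not $C^{i}$:
\[
C^{2\norm{\beta^\ell}}(\beta^\ell\dot!)^2=\prod_{i}\bigl(C^{i+1}(i+1)\bigr)^{2\beta^\ell(i)}(\beta^\ell(i)!)^2.
\]
With the corrected factorization, the ``crude'' bound you dismissed goes through without any further refinement. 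Your observation $C^{i}(i+1)>\ell/(\beta^\ell(i)+1)\ge\ell/(2\beta^\ell(i))$ (valid when $\beta^\ell(i)\ge1$; the other $i$ contribute trivial factors $1$) gives $C^{i+1}(i+1)>C\ell/(2\beta^\ell(i))$, and then together with $m!\ge(m/e)^m$ one gets
\[
C^{2\norm{\beta^\ell}}(\beta^\ell\dot!)^2\ge\prod_i\Bigl(\frac{C\ell}{2\beta^\ell(i)}\Bigr)^{2\beta^\ell(i)}\Bigl(\frac{\beta^\ell(i)}{e}\Bigr)^{2\beta^\ell(i)}=\Bigl(\frac{C\ell}{2e}\Bigr)^{2\abs{\beta^\ell}},
\]
while, exactly as you computed, $C''^{2\abs{\beta^\ell}}(2\abs{\beta^\ell})!\le\bigl(C''\cdot2(1+\epsilon)\ell\bigr)^{2\abs{\beta^\ell}}=\bigl(C\ell/(2e)\bigr)^{2\abs{\beta^\ell}}$. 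So the two sides match and the proof is done; there is no residual factor of $C^{2\abs{\beta^\ell}}$ to chase down, and the elaborate head--tail splitting in your items (ii)--(iv) is unnecessary. What you have, once the slip is fixed, is in substance the paper's own proof, just written multiplicatively rather than by taking logarithms and applying Stirling in the additive form $\log m!\ge m\log m-m$.
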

\begin{proof}
  The size $\abs{\beta^\ell}$ of $\beta^\ell$ clearly depends on $C$
  and can be estimated from above (for sufficiently large $C\ge C_0$)
  in the following way:
  \begin{align*}
    1\le\ell\le \abs{\beta^\ell}&=\sum_{i=0}^\infty\biggl\lfloor\frac{\ell}{C^i(i+1)}\biggr\rfloor=\ell+\sum_{i=1}^\infty\biggl\lfloor\frac{\ell}{C^i(i+1)}\biggr\rfloor\le\ell+\int_0^\infty\frac{\ell}{C^x(x+1)}\dd x\\
    &=\ell+\bigl[\ell C\Ei(-(x+1)\log(C))\bigr]_{x=0}^\infty\\
    &=\ell-\ell C\Ei(-\log(C))\,,
  \end{align*}
  where $\Ei$ denotes the exponential integral function, see e.g.\
  \cite{NIST} for the definition and properties. Since
  \begin{equation*}
    x\Ei(-\log(x))\sim-\frac{1}{\log(x)}\quad\textup{ as } x\to\infty,
  \end{equation*}
  for any $\epsilon>0$, we can pick $C_0$ so that for $C>C_0$, we have
  $-C\Ei(-\log(C))<\epsilon$. Putting this together, we get
  \begin{equation*}
    \abs{\beta^\ell}\le(1+\epsilon)\ell
  \end{equation*}
  for any $\ell$ and $C>C_0$ which is \eqref{eq:betaellinequality2}.

  Let $W$ denote the Lambert $W$-function (again, see
  \cite{NIST}). Then, using Stirling's formula, we can estimate\allowdisplaybreaks
  \begin{align}
    \log(C^{\norm{\beta^\ell}}\beta^\ell\dot!)
    &=\sum_{i=0}^\infty\log\Bigl(C^{\bigl\lfloor\frac{\ell}{C^i(i+1)}\bigr\rfloor (i+1)}\bigl\lfloor\tfrac{\ell}{C^i(i+1)}\bigr\rfloor!(i+1)^{\bigl\lfloor\frac{\ell}{C^i(i+1)}\bigr\rfloor}\Bigr)\nonumber\\
    &=\smashoperator{\sum_{i=0}^{\bigl\lfloor\frac{W(\ell C\log(C))}{\log(C)}-1\bigr\rfloor}}{\bigl\lfloor\tfrac{\ell}{C^i(i+1)}\bigr\rfloor (i+1)}\log(C)+\log\bigl(\bigl\lfloor\tfrac{\ell}{C^i(i+1)}\bigr\rfloor!\bigr)+{\bigl\lfloor\tfrac{\ell}{C^i(i+1)}\bigr\rfloor}\log(i+1)\nonumber\\
    &\ge\smashoperator{\sum_{i=0}^{\bigl\lfloor\frac{W(\ell C\log(C))}{\log(C)}-1\bigr\rfloor}}{\bigl\lfloor\tfrac{\ell}{C^i(i+1)}\bigr\rfloor}\bigl((i+1)\log(C)+\log\bigl(\bigl\lfloor\tfrac{\ell}{C^i(i+1)}\bigr\rfloor\bigr)-1+\log(i+1)\bigr)\nonumber\\
    &\ge\smashoperator{\sum_{i=0}^{\bigl\lfloor\frac{W(\ell
          C\log(C))}{\log(C)}-1\bigr\rfloor}}{\bigl\lfloor\tfrac{\ell}{C^i(i+1)}\bigr\rfloor}\bigl(\log\bigl(C^{i+1}(i+1)\bigl\lfloor\tfrac{\ell}{C^i(i+1)}\bigr\rfloor\bigr)-1\bigr)\,.\label{eq:orderfactorialestimate1}
  \end{align}
  Since $\bigl\lfloor\tfrac{\ell}{C^i(i+1)}\bigr\rfloor=k$ for some
  $k\ge1$ 
  when $i\in\bigl\{0,\dotsc,\bigl\lfloor\frac{W(\ell
    C\log(C))}{\log(C)}-1\bigr\rfloor\bigr\}$ and hence
  $\frac{\ell}{C^i(i+1)}<k+1$, we see that
  $\frac{C\ell}{k+1}<C^{i+1}(i+1)$ and hence, using
  \eqref{eq:betaellinequality2}, we can estimate
  $C^{i+1}(i+1)\bigl\lfloor\tfrac{\ell}{C^i(i+1)}\bigr\rfloor>\frac{C\ell}{k+1}k\ge\frac{C\ell}{2}=C'2(1+\epsilon)\ell\ge
  C'2\abs{\beta^\ell}$ where $C'=C/(4(1+\epsilon))$. Then
  \eqref{eq:orderfactorialestimate1} can be estimated from below by
  \begin{equation*}
    \eqref{eq:orderfactorialestimate1}\ge\smashoperator{\sum_{i=0}^{\bigl\lfloor\frac{W(\ell C\log(C))}{\log(C)}-1\bigr\rfloor}}\bigl\lfloor\tfrac{\ell}{C^i(i+1)}\bigr\rfloor\bigl(\log(2\abs{\beta^\ell})+\log(C')-1\bigr)=\abs{\beta^\ell}\bigl(\log(2\abs{\beta^\ell})+\log(C')-1\bigr).
  \end{equation*}
  This means that, again appealing to Stirling's formula,
  \begin{equation*}
    C^{2\norm{\beta^\ell}}(\beta^\ell\dot!)^2\ge\biggl(\frac{2\abs{\beta^\ell}C'}e\biggr)^{2\abs{\beta^\ell}}\ge C''^{2\abs{\beta^\ell}}(2\abs{\beta^\ell})!\,,
  \end{equation*}
  where $C'' = C'/e=C/(4e(1+\epsilon))$, which is \eqref{eq:betaellinequality1}.
\end{proof}
\section{Estimates on the reduced order factorial of general polyindices}
\label{sec:redordfac}
In this section, we prove a result, Corollary~\ref{cor:redordfac},
which we need to control the order factorials in the commutator
formula \eqref{eq:commutatorformula} from
Theorem~\ref{thm:commutatorformula}. In Section~\ref{sec:betaell}, we
took care of a special case in Lemma~\ref{lem:betaell}. Here, we show
that we can split the general case into two factors, one which is
handled by Lemma~\ref{lem:betaell}, and another, which can be handled
by a simple estimate.
\begin{lem}
  \label{lem:redordfac}
  There exists a constant $C_0>0$ such that for all $C>C_0$ and all
  polyindices $\beta, b$,
  \begin{equation*}
    c^{\abs{\gamma_{\beta+b}}}\gamma_{\beta+b}!\le
    C^{\norm{\beta}+\norm{b}}\beta\dot!b\dot!,
  \end{equation*}
  where $c$ can be chosen as $C/(8e)$.
\end{lem}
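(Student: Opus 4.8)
The plan is to reduce the statement to a single one-dimensional estimate that can be read off from the multinomial theorem. First I would use that every quantity in the asserted inequality factorises over the spatial index $\sigma\in\{1,\dotsc,d\}$: since $(\gamma_{\beta+b})_\sigma=\abs{\beta_\sigma}+\abs{b_\sigma}$ one has $\gamma_{\beta+b}!=\prod_{\sigma=1}^d(\abs{\beta_\sigma}+\abs{b_\sigma})!$ and $\abs{\gamma_{\beta+b}}=\abs\beta+\abs b=\sum_\sigma(\abs{\beta_\sigma}+\abs{b_\sigma})$, while $\norm\beta=\sum_\sigma\norm{\beta_\sigma}$ and $\beta\dot!=\prod_\sigma\beta_\sigma\dot!$, the components $\beta_\sigma$ (and $b_\sigma$) being read as one-dimensional polyindices. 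Hence it suffices to prove, for all one-dimensional polyindices $\alpha,a$,
\begin{equation*}
  c^{\abs\alpha+\abs a}(\abs\alpha+\abs a)!\le C^{\norm\alpha+\norm a}\alpha\dot!\,a\dot!.
\end{equation*}
Splitting the factorial by $(\abs\alpha+\abs a)!=\binom{\abs\alpha+\abs a}{\abs\alpha}\abs\alpha!\,\abs a!\le 2^{\abs\alpha+\abs a}\abs\alpha!\,\abs a!$ and using $2c=C/(4e)$, this in turn reduces to the single-polyindex estimate
\begin{equation*}
  \Bigl(\tfrac{C}{4e}\Bigr)^{\abs\alpha}\abs\alpha!\le C^{\norm\alpha}\alpha\dot!,
\end{equation*}
applied separately to $\alpha$ and to $a$.

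For the latter, put $n_i=\alpha(i)$ and $m=\abs\alpha=\sum_i n_i$. Unwinding the definitions gives $C^{\norm\alpha}\alpha\dot!=\prod_{i\ge0}n_i!\,\bigl(C^{i+1}(i+1)\bigr)^{n_i}$, so after dividing by $\prod_i n_i!$ and distributing $(C/(4e))^m=\prod_i(C/(4e))^{n_i}$ over the levels, the estimate becomes
\begin{equation*}
  \frac{m!}{\prod_{i\ge0}n_i!}\le\prod_{i\ge0}\bigl(4e\,C^i(i+1)\bigr)^{n_i}.
\end{equation*}
I would obtain this from the elementary inequality $\frac{m!}{\prod_i n_i!}\le\frac{(\sum_i x_i)^m}{\prod_i x_i^{n_i}}$, valid for any summable family of positive reals $(x_i)_{i\ge0}$ (its left-hand side times $\prod_i x_i^{n_i}$ is one of the non-negative summands in the multinomial expansion of $(\sum_i x_i)^m$), applied with $x_i=\bigl(4e\,C^i(i+1)\bigr)^{-1}$: then $\prod_i x_i^{-n_i}$ is exactly the product on the right, and $\sum_{i\ge0}x_i=\frac1{4e}\sum_{i\ge0}\frac1{C^i(i+1)}\le\frac1{4e}\sum_{i\ge0}C^{-i}=\frac1{4e}\cdot\frac{C}{C-1}\le\frac1{2e}<1$ as soon as $C\ge2$, so $(\sum_i x_i)^m\le1$ and the inequality follows. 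This proves the lemma with $C_0=2$, the value $c=C/(8e)$ being comfortably admissible.

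The only step in this argument that is not routine bookkeeping is the choice of weights: one wants the per-level cost $C^{i+1}(i+1)$ balanced against $x_i\propto\bigl(C^i(i+1)\bigr)^{-1}$, after which the geometric tail $\sum_i C^{-i}$ provides exactly the slack that absorbs the constant $4e$; so in this route there is essentially no obstacle. One could instead follow the plan announced before Lemma~\ref{lem:betaell}: for the ``balanced'' pair $\alpha=a=\beta^\ell$ the one-dimensional two-polyindex inequality reads $c^{2\abs{\beta^\ell}}(2\abs{\beta^\ell})!\le C^{2\norm{\beta^\ell}}(\beta^\ell\dot!)^2$, which is \eqref{eq:betaellinequality1} together with $c=C/(8e)\le C''=C/(4e(1+\epsilon))$ (valid for $\epsilon<1$); what then remains is the ``simple estimate'' reducing an arbitrary pair $\alpha,a$ to $\beta^\ell$ — a rearrangement argument based on the fact that shifting size from a higher to a lower level decreases $C^{\norm\alpha}\alpha\dot!$ up to the clustering penalty $n_i!$ — and that reduction is the step I would expect to be the main technical nuisance along that route, which the multinomial computation sidesteps entirely.
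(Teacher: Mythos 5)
Your proof is correct, and it takes a genuinely different and considerably cleaner route than the paper. Both proofs start the same way, by factorising over $\sigma$ and reducing to the one-dimensional inequality $c^{\abs{\beta_\sigma}+\abs{b_\sigma}}(\abs{\beta_\sigma}+\abs{b_\sigma})!\le C^{\norm{\beta_\sigma}+\norm{b_\sigma}}\beta_\sigma\dot!\,b_\sigma\dot!$, but from there they diverge. The paper rewrites $C^{\norm{\beta_\sigma}+\norm{b_\sigma}}\beta_\sigma\dot!\,b_\sigma\dot!$ as an ordered product $\prod_j p_j$ of factors $p_j=C^{i+1}y(i+1)$, chooses $\ell$ maximal with $2\abs{\beta^\ell}\le\abs{\beta_\sigma}+\abs{b_\sigma}$, compares the small factors with the product coming from the ``balanced'' polyindex $\beta^\ell$ via Lemma~\ref{lem:betaell} (whose proof needs Stirling's formula, the exponential integral, and the Lambert $W$-function to estimate $\abs{\beta^\ell}$ and $\beta^\ell\dot!$), and controls the remaining large factors by a falling-factorial bound; the constant $C_0$ that results is not explicit. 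You instead split off a binomial factor $2^{\abs\alpha+\abs a}$ to reduce to the single-polyindex estimate $(C/(4e))^{\abs\alpha}\abs\alpha!\le C^{\norm\alpha}\alpha\dot!$, and then observe that after unwinding $C^{\norm\alpha}\alpha\dot!=\prod_i n_i!\,(C^{i+1}(i+1))^{n_i}$ this is exactly one non-negative term of the multinomial expansion of $(\sum_i x_i)^m$ with $x_i=(4eC^i(i+1))^{-1}$, and $\sum_i x_i\le\tfrac1{2e}<1$ already for $C\ge2$. This sidesteps Lemma~\ref{lem:betaell} entirely, is self-contained, and gives the explicit value $C_0=2$ with the same $c=C/(8e)$; the price is only that you pick up the factor $2^{\abs\alpha+\abs a}$ from the binomial split, which is harmless. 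All steps check out: the factorisation over $\sigma$, the binomial bound $\binom{\abs\alpha+\abs a}{\abs\alpha}\le2^{\abs\alpha+\abs a}$, the rewriting of $C^{\norm\alpha}\alpha\dot!$, and the extraction of a single term from the multinomial expansion (valid since only finitely many $n_i$ are nonzero and all summands are non-negative). This is an improvement on the paper's argument for this lemma.
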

\begin{rem}
  \label{rem:redordfac}
  Note that the constant $c$ can be chosen arbitrarily large as long
  as $C$ is adjusted accordingly. Note also that
  Lemma~\ref{lem:redordfac} is ``sharp'' in the sense that for all $k$
  and all multiindices $\gamma$ with $\abs{\gamma}=k$, and $\beta$
  given by $\beta(0)=\gamma$ and $\beta_\sigma(i)=0$ for $i\ge1$ and
  all $\sigma$, one has that $\norm{\beta}=k$ and
  $\gamma!=\gamma_\beta!=\beta\dot!$.
\end{rem}
\begin{proof}

  We begin by observing that since we can factorize according to
  dimension
  $c^{\abs{\gamma_{\beta+b}}}\gamma_{\beta+b}!=\prod_{\sigma=1}^dc^{\abs{\beta_\sigma}+\abs{b_\sigma}}(\abs{\beta_\sigma}+\abs{b_\sigma})!$
  and
  $C^{\norm{\beta}+\norm{b}}\beta\dot!b\dot!=\prod_{\sigma=1}^dC^{\norm{\beta_\sigma}+\norm{b_\sigma}}\beta_\sigma\dot!b_\sigma\dot!$,
  it is enough to prove that for any pair $\beta,b$ and any $\sigma$,
  $1\le\sigma\le d$, we have
  \begin{equation}
    c^{\abs{\beta_\sigma}+\abs{b_\sigma}}(\abs{\beta_\sigma}+\abs{b_\sigma})!\label{eq:factorialinequality1}\le C^{\norm{\beta_\sigma}+\norm{b_\sigma}}\beta_\sigma\dot!b_\sigma\dot!,
  \end{equation}
  for some constants $c$ and $C$.  Let $\beta,b$ and $\sigma$ be
  given, let $0<\epsilon<1$ and $C>C_0$ with $C_0$ as in
  Lemma~\ref{lem:betaell}, and let $\ell\in\N$ be the largest number
  such that
  $2\abs{\beta^\ell}\le\abs{\beta_\sigma}+\abs{b_\sigma}$. Rewrite
  \begin{align}
    C^{\norm{\beta_\sigma}+\norm{b_\sigma}}\beta_\sigma\dot!b_\sigma\dot!&=\prod_{i=0}^\infty C^{(\beta_\sigma(i)+b_\sigma(i))(i+1)}\beta_\sigma(i)!(i+1)^{\beta_\sigma(i)}b_\sigma(i)!(i+1)^{\beta_\sigma(i)}
    \nonumber
    \\
    &=\prod_{i=0}^\infty\biggl(\prod_{y=1}^{\beta_\sigma(i)}
    C^{(i+1)}y(i+1)\biggr)\biggl(\prod_{y=1}^{b_\sigma(i)}
    C^{(i+1)}y(i+1)\biggr)=\smashoperator{\prod_{j=1}^{\abs{\beta_\sigma}+\abs{b_\sigma}}}p_j,
    \label{eq:p_j}
  \end{align}
  for some $p_j\le p_{j+1}$, i.e.\ a product of
  $\abs{\beta_\sigma}+\abs{b_\sigma}$ factors of the form $p_j=C^{i+1}y(i+1)$
  with either $y\le\beta_\sigma(i)$ or $y\le b_\sigma(i)$. Replacing
  $\beta_\sigma$ and $b_\sigma$ by $\beta^\ell$ in the above identity
  yields
  \begin{equation}\label{eq:factorization3}
    C^{2\norm{\beta^\ell}}(\beta^\ell\dot!)^2=\prod_{i=0}^\infty\biggl(\smashoperator[r]{\prod_{y=1}^{\bigl\lfloor\frac{\ell}{C^i(i+1)}\bigr\rfloor}}C^{i+1}y(i+1)\biggr)^{\!\!2},
  \end{equation}
  where, for each $i$,
  $y$
  runs through exactly those integers for which $C^{i+1}y(i+1)\le
  C\ell$. Note that \eqref{eq:factorization3} contains exactly
  $2\abs{\beta^\ell}$
  (non-trivial) factors. We will now compare the first
  $2|\beta^\ell|$
  factors in \eqref{eq:p_j} with the factors appearing in
  \eqref{eq:factorization3}. More precisely, we split the ordered
  product in \eqref{eq:p_j} into those which also appear in
  \eqref{eq:factorization3} and a remainder. Note that by the above
  discussion this splitting corresponds to sorting the relevant
  $p_j$
  into those less than or equal to $C\ell$
  and those strictly larger than $C\ell$.
  The first group can be written as
  \begin{equation*}
    \prod_{p_j\le C\ell}p_j=\prod_{i=0}^\infty\biggl(\smashoperator[r]{\prod_{y=1}^{\beta_\sigma(i)\land\beta^\ell(i)}}
    C^{(i+1)}y(i+1)\biggr)\biggl(\smashoperator[r]{\prod_{y=1}^{b_\sigma(i)\land\beta^\ell(i)}}
    C^{(i+1)}y(i+1)\biggr),
  \end{equation*}
  where $f\land g$ denotes the minimum of $f$ and $g$ and the whole product can then be rewritten as in \eqref{eq:factorization2}. 
  \begin{align}
    \prod_{j=1}^{2\abs{\beta^\ell}}p_j&=\Biggl(\prod_{i=0}^\infty\biggl(\smashoperator[r]{\prod_{y=1}^{\beta_\sigma(i)\land\beta^\ell(i)}}
    C^{(i+1)}y(i+1)\biggr)\biggl(\smashoperator[r]{\prod_{y=1}^{b_\sigma(i)\land\beta^\ell(i)}}
    C^{(i+1)}y(i+1)\biggr)\Biggr)\smashoperator{\prod_{\substack{j=1\\p_j>C\ell}}^{2\abs{\beta^\ell}}}p_j
    \label{eq:factorization2}
    \\
    &\ge C''^{2\abs{\beta^\ell}}(2\abs{\beta^\ell})!.
    \label{eq:suffpotproof1}
  \end{align}
  To obtain the estimate \eqref{eq:suffpotproof1} simply note that for
  each term missing to obtain
  $C^{2\|\beta^\ell\|}{\beta^\ell\dot{!}}$
  in the first product we find one term in the remainder for which
  $p_j>C\ell$.
  Since the missing term's value must be less than or equal to
  $C\ell$,
  we may estimate this $p_j$
  from below by this missing value. The claimed inequality then
  follows from \eqref{eq:betaellinequality1} in
  Lemma~\ref{lem:betaell}.

  Since $\ell$ was chosen as the largest integer such that
  $2\abs{\beta^\ell}\le\abs{\beta_\sigma}+\abs{b_\sigma}$, we have
  that 
  $\abs{\beta_\sigma}+\abs{b_\sigma}<2\abs{\beta^{\ell+1}}\le2(1+\epsilon)(\ell+1)$,
  where we for the last inequality used \eqref{eq:betaellinequality2}
  from Lemma~\ref{lem:betaell}.  This implies that
  $C''(\abs{\beta_\sigma}+\abs{b_\sigma})<C\ell$
  . For the remaining $p_j$, we then note that
  \begin{equation}\label{eq:suffpotproof2}
    \smashoperator{\prod_{j=2\abs{\beta^\ell}+1}^{\abs{\beta_\sigma}+\abs{b_\sigma}}}p_j\ge (C\ell)^{\abs{\beta_\sigma}+\abs{b_\sigma}-2\abs{\beta^\ell}}>C''^{\abs{\beta_\sigma}+\abs{b_\sigma}-2\abs{\beta^\ell}}(\abs{\beta_\sigma}+\abs{b_\sigma})_{\abs{\beta_\sigma}+\abs{b_\sigma}-2\abs{\beta^\ell}}\,,
  \end{equation}
  where $(x)_n=x(x-1)\dotsb(x-(n-1))$ denotes the falling
  factorial. Putting \eqref{eq:suffpotproof1} and
  \eqref{eq:suffpotproof2} together yields
  \eqref{eq:factorialinequality1} with $c=C''$.
\end{proof}
\begin{cor}
  \label{cor:redordfac}
  With $c$ and $C$ as in Lemma~\ref{lem:redordfac}, we have the following estimate:
  \begin{equation*}
    \frac{k!}{\alpha\ddot!\beta\ddot!a\ddot!b\ddot!}\le\frac{C^{\norm{\beta}+\norm{b}}k!}{c^{\abs{\gamma_{\beta+b}}}\gamma_{\beta+b}!\beta\bar!b\bar!}\,.
  \end{equation*}
\end{cor}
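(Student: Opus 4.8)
The plan is to deduce Corollary~\ref{cor:redordfac} from Lemma~\ref{lem:redordfac} together with the elementary factorization of the order factorial. The key algebraic observation, immediate from the definitions (using $(i+1)\cdot i! = (i+1)!$), is that $\beta\ddot! = \beta\dot!\,\beta\bar!$ and $b\ddot! = b\dot!\,b\bar!$ for the $d$-dimensional polyindices; the same identity holds for the $1$-dimensional polyindices $\alpha$ and $a$, but we will not need it for them.

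Concretely, I would first split off the $\bar{!}$-factors, writing
\[
  \frac{k!}{\alpha\ddot!\,\beta\ddot!\,a\ddot!\,b\ddot!}
  = \frac{k!}{\alpha\ddot!\,a\ddot!\,\cdot\,\beta\bar!\,b\bar!\,\cdot\,\beta\dot!\,b\dot!}\,.
\]
Since $\alpha\ddot!$ and $a\ddot!$ are products of factorials and of powers of factorials, hence $\ge 1$, the factor $\tfrac{1}{\alpha\ddot!\,a\ddot!}$ may simply be dropped, which yields the bound $\tfrac{k!}{\beta\bar!\,b\bar!\,\cdot\,\beta\dot!\,b\dot!}$. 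Finally, Lemma~\ref{lem:redordfac} asserts $c^{\abs{\gamma_{\beta+b}}}\gamma_{\beta+b}!\le C^{\norm{\beta}+\norm{b}}\beta\dot!\,b\dot!$, equivalently $\tfrac{1}{\beta\dot!\,b\dot!}\le\tfrac{C^{\norm{\beta}+\norm{b}}}{c^{\abs{\gamma_{\beta+b}}}\gamma_{\beta+b}!}$; substituting this estimate into the previous display produces exactly the claimed inequality
\[
  \frac{k!}{\alpha\ddot!\,\beta\ddot!\,a\ddot!\,b\ddot!}
  \le \frac{C^{\norm{\beta}+\norm{b}}\,k!}{c^{\abs{\gamma_{\beta+b}}}\gamma_{\beta+b}!\,\beta\bar!\,b\bar!}\,.
\]

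There is essentially no obstacle here: all the genuine analytic work — the Stirling-type comparison of $\gamma_{\beta+b}!$ with $\beta\dot!\,b\dot!$, which itself rests on Lemma~\ref{lem:betaell} — has already been carried out in Lemma~\ref{lem:redordfac}, so the present corollary is pure bookkeeping. The only points worth double-checking are that $\ddot! = \dot!\cdot\bar!$ holds as an exact identity for each polyindex, and that one keeps the factors $\beta\bar!\,b\bar! = \prod_{\sigma,i}(i!)^{\beta_\sigma(i)+b_\sigma(i)}$ on the right-hand side rather than discarding them, since they are retained for use in the subsequent estimates on the commutator formula~\eqref{eq:commutatorformula}.
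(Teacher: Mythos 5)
Your proof is correct and follows exactly the route the paper intends (the paper merely states ``This follows easily from Lemma~\ref{lem:redordfac}''): split $\ddot! = \dot!\cdot\bar!$ using the defining identity $\beta\bar!=\beta\ddot!/\beta\dot!$, drop $\alpha\ddot!,a\ddot!\ge1$, and substitute the inequality from Lemma~\ref{lem:redordfac}.
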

\begin{proof}
  This follows easily from Lemma~\ref{lem:redordfac}.
\end{proof}
\section{The size of the summation index set}
\label{sec:noofterms}
In this section, we show that the number of terms in the commutator
formula \eqref{eq:commutatorformula} from
Theorem~\ref{thm:commutatorformula} grows in a controllable way.
\begin{prp}
  \label{prp:noofterms}
  The number of terms in the iterated commutator formula for the
  interaction term \eqref{eq:commutatorformula} from
  Theorem~\ref{thm:commutatorformula} is bounded by $c_d^k$, where
  $c_d$ is some constant which only depends on the dimension $d$.
\end{prp}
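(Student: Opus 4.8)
The plan is to turn the statement into a counting estimate for partition‑type quantities. By Remark~\ref{rem_size_of_set}, the number of $1$‑dimensional polyindices $\alpha$ with $\norm{\alpha}=m$ is the partition number $p(m)$, and $p(m)<e^{c\sqrt m}$ with $c=\pi\sqrt{2/3}$. The first step is to extend this to dimension $d$: a $d$‑dimensional polyindex $\beta$ of order $m$ is exactly a $d$‑tuple $(\beta_1,\dotsc,\beta_d)$ of $1$‑dimensional polyindices with $\norm{\beta_1}+\dotsb+\norm{\beta_d}=m$, so the number of such $\beta$ equals $\sum_{m_1+\dotsb+m_d=m}\prod_{\sigma=1}^dp(m_\sigma)$. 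Bounding the number of ordered decompositions by $\binom{m+d-1}{d-1}\le(m+1)^{d-1}$, each $p(m_\sigma)$ by $e^{c\sqrt{m_\sigma}}$, and using $\sum_\sigma\sqrt{m_\sigma}\le\sqrt d\,\sqrt m$ (Cauchy--Schwarz), this yields that the number of $d$‑dimensional polyindices of order $m$ is at most $(m+1)^{d-1}e^{c\sqrt d\,\sqrt m}$.

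The second step is to assemble the four polyindices. The index set of \eqref{eq:commutatorformula} is the set of quadruples $(\alpha,\beta,a,b)$ with $\alpha,a$ of dimension $1$ and $\beta,b$ of dimension $d$, subject to $\norm{\alpha}+\norm{\beta}+\norm{a}+\norm{b}=k$. There are at most $\binom{k+3}{3}\le(k+1)^3$ ways to prescribe the four orders $k_1,k_2,k_3,k_4$ (with $k_1+k_2+k_3+k_4=k$), and for each such choice the number of compatible quadruples is, by the first step and $k_i\le k$, $p(k_i)<e^{c\sqrt k}$, $\sqrt{k_2}+\sqrt{k_4}\le2\sqrt k$, bounded by
\begin{equation*}
  p(k_1)\,p(k_3)\,(k_2+1)^{d-1}(k_4+1)^{d-1}\,e^{c\sqrt{d}\,(\sqrt{k_2}+\sqrt{k_4})}\ \le\ (k+1)^{2(d-1)}\,e^{C_d\sqrt{k}},
\end{equation*}
with $C_d=2c(1+\sqrt d)$ depending only on $d$. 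Hence the total number of terms is at most $(k+1)^3(k+1)^{2(d-1)}e^{C_d\sqrt k}=(k+1)^{2d+1}e^{C_d\sqrt k}$.

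Finally, a bound of the form $(k+1)^{2d+1}e^{C_d\sqrt k}$ is $e^{o(k)}$, hence dominated by $c_d^k$ for a constant $c_d$ depending only on $d$: taking logarithms and dividing by $k$, one has for all $k\ge1$ that $(2d+1)\tfrac{\log(k+1)}{k}+\tfrac{C_d}{\sqrt k}\le 2d+1+C_d$ (using $\log(k+1)\le k$ and $\sqrt k\ge1$), so $c_d:=e^{2d+1+C_d}$ works, and $k=0$ is trivial. There is no real obstacle in this argument; the only point requiring care is to keep every intermediate estimate genuinely subexponential in $k$, so that the comparison with $c_d^k$ goes through with a $k$‑independent constant.
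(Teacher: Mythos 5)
Your proof is correct and takes essentially the same route as the paper: both decompose the index set by prescribing the orders via a weak composition, invoke the partition-number bound $p(m)<e^{c\sqrt m}$ from Remark~\ref{rem_size_of_set}, and bound the number of compositions polynomially in $k$. The only organizational difference is that you first count $d$-dimensional polyindices of a given order (using Cauchy--Schwarz for $\sum_\sigma\sqrt{m_\sigma}\le\sqrt d\sqrt m$) and then combine the four slots, whereas the paper splits directly into $2d+2$ one-dimensional orders and uses the cruder bound $\sum_j\sqrt{k_j'}\le k$; both give a subexponential count which is then coarsened to $c_d^k$.
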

\begin{proof}
  For any fixed $k$, the set
  \begin{equation*}
    \{(\alpha,\beta,a,b):\alpha,a\in
    C_0(\N_0,\N_0), \beta,b\in
    C_0(\N_0,\N_0^d),\norm{\alpha}+\norm{\beta}+\norm{a}+\norm{b}=k\},
  \end{equation*}
  where $C_0(\N_0,\N_0)$ and $C_0(\N_0,\N_0^d)$ denote the sets of
  $1$- and $d$-dimensional polyindices, respectively, is the index set
  for the summation formula \eqref{eq:commutatorformula} from
  Theorem~\ref{thm:commutatorformula} for the $k$'th iterated
  commutator of the interaction term $T_0$ with the conjugate operator
  $A$.  It can also be written as
  \begin{align*}
    \{(\alpha,\beta,a,b):{}&\alpha,a,\beta_\sigma,b_\sigma\in
    C_0(\N_0,\N_0),
    \beta=(\beta_\sigma)_{\sigma=1}^d,b=(b_\sigma)_{\sigma=1}^d,\\&\norm{\alpha}+\norm{a}+\sum_{\sigma=1}^d\norm{\beta_\sigma}+\norm{b_\sigma}=k\}.
  \end{align*}
  For any weak composition (see e.g.\ \cite{HMComb})
  $\sum_{j=1}^{2d+2}k'_j=k$ of $k$ with exactly $2d+2$ terms, there
  are
  \begin{equation*}
    \prod_{j=1}^{2d+2}\#\{\alpha\in C_0(\N_0,\N_0):\norm{\alpha}=k'_j\}
  \end{equation*}
  different ways of satisfying the condition:
  \begin{align*}
    \norm{\alpha}&=k'_1,&\norm{a}&=k'_2,&
    \norm{\beta_\sigma}&=k'_{\sigma+2},&\norm{b_\sigma}&=k'_{\sigma+d+2},\quad\textup{for
    }\sigma=1,\dotsc,d.
  \end{align*}
  By Remark \ref{rem_size_of_set} the set $\{\alpha\in C_0(\N_0,\N_0)\colon\norm{\alpha}=k\}$ has
  exactly $p(k)$ elements. 

  We now want to rewrite the weak composition
  $\sum_{j=1}^{2d+2}k'_j=k$ of $k$ with exactly $2d+2$ terms in the
  following way. Let $k_j=\sum_{n=1}^jk'_n$ for
  $j=1,\dotsc,2d+1$. Then $k'_1=k_1$,
  \begin{equation*}
    k'_j=k_j-k_{j-1}\quad\textup{for}\quad 2\le j\le2d+1,
  \end{equation*}
  $k'_{2d+2}=k-k_{2d+1}$, and $0\le k_j\le k_{j+1}\le k$ for any
  $j=1,\dotsc,2d$. This means that any weak $2d+2$-term composition of
  $k$ is given by a finite, increasing sequence
  $\{k_j\}_{j=1}^{2d+1}$, i.e.\ satisfying $0\le k_1\le
  k_2\le\dotsb\le k_{2d}\le k_{2d+1}\le k$. We can now count -- and
  estimate -- the number of weak $2d+2$-term compositions of $k$ by
  counting the number of ways we can choose the sequence
  $\{k_j\}_{j=1}^{2d+1}$:
  \begin{align}
  	&    \smashoperator{\sum_{\substack{k_1,k_2,\dotsc,k_{2d},k_{2d+1}\\0\le k_1\le k_2\le\dotsb\le k_{2d}\le k_{2d+1}\le k}}}1 ~~~~~~~~~~=
    \binom{k+2d+1}{2d+1} \nonumber\\
    &= 
    \frac{(k+1)(k+2)\dotsb(k+2d)(k+2d+1)}{(2d+1)!}<C_d^k,\label{eq:binominequality}
  \end{align}
  for some sufficiently large constant $C_d>0$. Note that in the first step of \eqref{eq:binominequality} we have used that the sequence of positive integers $k_1 \leq k_2 \leq \dots \leq k_{2d+1}$ with $0\leq k_i\leq k$ can be identified with a strictly increasing sequence $h_1 < h_2 < \dots <h_{2d+1}$ in a 1 to 1 fashion by putting $h_i=k_i+i$. However, the number of possible choices for such $h_i$ obeying $1\leq h_i \leq k+2d+1$ is exactly $k+2d+1$ choose $2d+1$ as claimed in the first step of \eqref{eq:binominequality}.

  We can now estimate the number of elements in the summation index
  set from above in the following way:
  \begin{align*}
    \MoveEqLeft\# \{(\alpha,\beta,a,b):\alpha,a\in C_0(\N_0,\N_0),
    \beta,b\in
    C_0(\N_0,\N_0^d),\norm{\alpha}+\norm{\beta}+\norm{a}+\norm{b}=k\},
    \\
    &=\smashoperator{\sum_{0\le k_1\le k_2\le\dotsb\le k_{2d}\le
        k_{2d+1}\le k}}p(k_1)p(k_2-k_1)\dotsb
    p(k_{2d+1}-p_{2d})p(k-k_{2d+1})
    \\
    &<\smashoperator{\sum_{0\le k_1\le k_2\le\dotsb\le k_{2d}\le
        k_{2d+1}\le
        k}}e^{c(\sqrt{k_1}+\sum_{j=2}^{2d+1}\sqrt{k_j-k_{j-1}}+\sqrt{k-k_{2d+1}})}
    \\
    &\le\smashoperator{\sum_{0\le k_1\le k_2\le\dotsb\le k_{2d}\le
        k_{2d+1}\le k}}e^{ck}
    \\
    &<(C_de^c)^k,
  \end{align*}
  where we used \eqref{eq:binominequality} for the last
  inequality. 
\end{proof}

\section{Commutators with the free Hamiltonian}
\label{sec:freeham}
In this section we assume that $V=0$ so that $H(\xi)=H_0(\xi)$ is
simply given by multiplication with the function $\omega_{\xi}$. Since
$A=\frac{\ii}{2}\mathrm{div}(v_\xi) + \ii v_{\xi}\cdot\nabla_k$, it is
easy to see that the commutator form $[A,M_f]$, where $f$ is some
function, is given by the operator
\begin{align*}
\ad_{A}(M_f) &= M_{\ii v_{\xi}\cdot \nabla_k f}.
\end{align*} 
If the gradient is finite almost everywhere, this operator is again
just a multiplication operator by a bounded function and is thus
bounded as well. In particular, this is true for the choice
$f=\omega_\xi$, see Section~\ref{sec:model}. Furthermore, we may
iterate the preceeding calculation and obtain that the $n$'th
commutator form is given by the bounded multiplication operator
\begin{align}\label{eq:it_comm_f}
\ad^n_{A}(M_f) &= M_{( \ii v_{\xi}\cdot \nabla_k)^n f}, ~~~n\geq 1
\end{align}
provided the $n$-th derivatives of f remain finite. As noted above the choice $f=\omega_\xi$ thus yields a bounded operator.

\begin{prp}[Commutator Bounds in the Free Case]	\label{prp_CommBoundsFree}
  For all $n\in\mathds{N}$ the iterated commutator
  $\ad_{A}^n(H_0(\xi))$ is given by a bounded multiplication
  operator as follows
  \begin{equation}\label{eq:it_comm_free}
    \ad^n_{A}(H_0(\xi)) = M_{( \ii v_{\xi}\cdot \nabla_k)^n\omega_{\xi}},
  \end{equation}
  where $M_f$ is the operator given by multiplication with $f$, and
  there exists a constant $C_\xi>0$ independent of $n$ and $h$, which
  depends continuously on $\xi\in\R^d$ such that we have the pointwise
  estimate
  \begin{equation}\label{eq:pw_est_omega}
    \abs{(\ii v_{\xi}\cdot \nabla_k)^n\omega_{\xi}(h)} \leq C_\xi^n n! \jnorm{h}^{2s_2}e^{-h^2}
  \end{equation}
  for all $h\in\R^d$. In particular, there exists a constant $c_\xi$
  which depends continuously on $\xi$ such that for all $k\in\N$,
  \begin{equation}\label{eq:est_it_comm_free}
    \norm{\ad_A^k(H_0(\xi))}\le c_\xi^kk!\,.
  \end{equation}
\end{prp}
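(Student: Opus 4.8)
The plan is to establish the two displayed formulas by induction on $n$, then extract the pointwise bound \eqref{eq:pw_est_omega} from a Cauchy-estimate argument using the analytic continuation of $\omega_\xi$ into the strip $S_{2R}^d$, and finally turn the pointwise bound into the operator norm bound \eqref{eq:est_it_comm_free} by taking a supremum.

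First I would prove \eqref{eq:it_comm_free}. The base case $n=1$ is the computation displayed just before the proposition: since $A = \tfrac{\ii}{2}\div(v_\xi) + \ii v_\xi\cdot\nabla_k$ and $M_f$ commutes with $M_{\frac{\ii}{2}\div(v_\xi)}$, one has $[A,M_f] = [\ii v_\xi\cdot\nabla_k, M_f] = M_{\ii v_\xi\cdot\nabla_k f}$. For the inductive step one applies the same identity with $f$ replaced by $(\ii v_\xi\cdot\nabla_k)^{n}\omega_\xi$; the only thing to check is that this function is still $C^1$ with finite gradient, which holds because $\omega_\xi$ is real analytic on $\R^d$ and $v_\xi(k) = e^{-k^2-\xi^2}(\nabla_k\omega_\xi)(k)$ is smooth, so every iterate is smooth. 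This gives \eqref{eq:it_comm_free} for all $n$.

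The heart of the matter is the pointwise estimate \eqref{eq:pw_est_omega}, and this is where the exponential weight in $v_\xi$ does its work. The idea is: writing $D = \ii v_\xi\cdot\nabla_k$, the function $D^n\omega_\xi(h)$ is a sum of terms each of which is a product of $n$ copies of components of $v_\xi$ (hence $n$ factors of $e^{-k^2-\xi^2}$, i.e.\ the weight $e^{-h^2}$ appears to the $n$-th power, but we only keep one and absorb the rest) times derivatives of $\omega_\xi$ of total order $n+1$, together with derivatives of the exponential weight $e^{-k^2}$ up to order $n-1$. Each derivative of $e^{-k^2}$ contributes a Hermite-polynomial factor, which is harmless against the remaining Gaussian weight and contributes an $n!$-type growth. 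The derivatives of $\omega_\xi$ are controlled by a Cauchy estimate: since $\omega_\xi(k) = \omega_1(\xi-k)+\omega_2(k)$ extends analytically to the strip $S_{2R}^d$ and is bounded there by $C\jnorm{k}^{s_2}$ via Condition~\ref{cond_ex}\eqref{item-cond_ex2} (propagated from the $|\alpha|\le 1$ bounds to all orders by a standard strip/Cauchy argument), a polydisc of radius $R$ around each real point $h$ gives $|\partial^\gamma\omega_\xi(h)| \le C' |\gamma|!\, R^{-|\gamma|}\jnorm{h}^{s_2}$. Collecting: the combinatorial count of terms is exponential in $n$, each term has a factor $\gamma!\le n!$ from the Cauchy estimate, a factor bounded by a constant to the $n$ from the weight derivatives and the powers of $R^{-1}$, and the single surviving Gaussian $e^{-h^2}$; and the polynomial growth never exceeds $\jnorm{h}^{s_2}\cdot(\text{lower order from }\nabla\omega)\le\jnorm{h}^{2s_2}$ after multiplying the $\le n+1$ derivative factors — but since the Gaussian weight $e^{-h^2}$ beats any polynomial, one can afford to be crude here. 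Tracking the dependence on $\xi$: the factor $e^{-\xi^2}$ from each $v_\xi$ and the shift $\xi-k$ in $\omega_1$ produce only continuous, indeed locally bounded, dependence on $\xi$, so the resulting constant $C_\xi$ is continuous in $\xi$. Altogether one gets $|D^n\omega_\xi(h)|\le C_\xi^n n!\,\jnorm{h}^{2s_2}e^{-h^2}$.

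Finally, \eqref{eq:est_it_comm_free} follows immediately: by \eqref{eq:it_comm_free} the operator $\ad_A^k(H_0(\xi))$ is multiplication by $(\ii v_\xi\cdot\nabla_k)^k\omega_\xi$, so its norm is the sup of the absolute value of that function, and by \eqref{eq:pw_est_omega} this is at most $C_\xi^k k!\sup_{h\in\R^d}\jnorm{h}^{2s_2}e^{-h^2} = c_\xi^k k!$ with $c_\xi = C_\xi\cdot(\sup_h\jnorm{h}^{2s_2}e^{-h^2})^{1/1}$ — more precisely absorb the ($\xi$-independent) finite supremum into the constant, preserving continuity in $\xi$. The main obstacle is bookkeeping in the pointwise estimate: one must organize the expansion of $D^n\omega_\xi$ so that the number of terms, the orders of the derivatives of $\omega_\xi$, and the orders of the derivatives hitting $e^{-k^2}$ are all simultaneously under control with only $n!$ and geometric-in-$n$ losses — this is precisely the kind of Faà di Bruno / Leibniz accounting that the polyindex machinery of Section~\ref{sec:notation} was built to handle, and since the proposition is quoted from \cite{EMR} the cleanest route is to invoke that combinatorial setup rather than redo it by hand.
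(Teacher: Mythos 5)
Your approach to the pointwise bound \eqref{eq:pw_est_omega} diverges sharply from the paper's, and the crucial step is not actually carried out. The paper avoids all Fa\`a di Bruno / Leibniz combinatorics by using the flow $\gamma_s^\xi$ of the vector field $v_\xi$: setting $u_\xi^s(k) := (v_\xi\cdot\ii\nabla_k\omega_\xi)(\gamma_s^\xi(k))$, one has
\[
(\ii v_\xi\cdot\nabla_k)^n\omega_\xi(k) = \frac{\mathrm{d}^{n-1}}{\mathrm{d}s^{n-1}}u_\xi^s(k)\Big|_{s=0},
\]
and then Cauchy's formula in the \emph{single complex time variable} $s$ on a circle of radius $r$ produces the factor $(n-1)!/r^{n-1}$ in one step; the analytic extension of the flow (with $|\gamma_z^\xi(k)-k|\le C_\omega|z|$ uniformly in $k,\xi$, borrowed from \cite{EMR}) together with Peetre's inequality then controls $\sup_{z\in\Gamma_r}|u_\xi(\gamma_z^\xi(h))|$ by $\tilde C_\xi\jnorm{h}^{2s_2}e^{-h^2}$.

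Your proposal instead expands $D^n\omega_\xi$ term-by-term, and there the combinatorics are both misdescribed and not discharged. Each term has $n$ factors coming from components of $v_\xi$, and since each such component already contains a $\nabla\omega_\xi$, the total order of $\omega_\xi$-derivatives per term can reach about $2n$, not $n+1$ as you write. The per-factor Cauchy estimates then produce factorials whose product can be as large as $(2n)!$, which must be cancelled by the Fa\`a di Bruno multinomial coefficients; the number of terms must stay sub-$n!$; and the Hermite polynomials from differentiating $e^{-k^2}$ must stay geometric in $n$ after being absorbed into the surplus Gaussian factors. You assert that all of this works out but prove none of it, and you defer the accounting to ``the polyindex machinery of Section~\ref{sec:notation}'' or to \cite{EMR} --- but that machinery is built for the convolution operators $T_{\beta,b}$, not for iterated directional derivatives of $\omega_\xi$, and \cite{EMR} (whose argument Section~\ref{sec:freeham} of this paper reproduces) in fact uses the flow method for the free part rather than any combinatorial expansion. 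The bookkeeping you flag as ``the main obstacle'' is the entire mathematical content of the estimate, and the sources you defer it to do not contain the argument you need. The remainder of your proposal --- the induction establishing \eqref{eq:it_comm_free}, and passing from \eqref{eq:pw_est_omega} to \eqref{eq:est_it_comm_free} by taking a supremum and absorbing $\sup_h\jnorm{h}^{2s_2}e^{-h^2}$ into the constant --- is correct and matches the paper.
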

\begin{proof}
  \eqref{eq:it_comm_free} follows directly from \eqref{eq:it_comm_f} with the choice $f=\omega_\xi$ and \eqref{eq:est_it_comm_free} is implied by \eqref{eq:pw_est_omega}. It thus suffices to prove \eqref{eq:pw_est_omega}. Note that
  \begin{align*}
    (v_{\xi}\cdot \ii\nabla_k )^n \omega_{\xi} (k) &=
    \frac{\mathrm{d}^{n-1}}{\mathrm{d}^{n-1}s}u_{\xi}^s(k)|_{s=0},
    ~~~~u_{\xi}^s(k):=(v_{\xi}\cdot\ii\nabla_k
    \omega_{\xi})(\gamma_s(k)),
  \end{align*}
  where $\gamma_s$ solves the ODE
  \begin{align*}
    \frac{\mathrm{d}}{\mathrm{d}s}\gamma_s^\xi(k)=v_{\xi}(\gamma_s^\xi(k)),
    \qquad\gamma_0(k)=k.
  \end{align*}
  By Lemma~3.5 of \cite{EMR}, for all $k,\xi\in\R^d$ the map
  $s\mapsto\gamma_s^\xi(k)$ extends analytically to a strip of some width
  $r>0$ independent of $k$ and $\xi$, such that $S_r\ni z\mapsto
  \gamma_z^\xi(k)\in S_R^d$. Moreover, there exists a constant $C_\omega>0$, which
  is also independent of $k,\xi\in\R^d$ such that
  $\abs{\gamma_z^\xi(k)-k}\le C_\omega\abs{z}$, see the Remark~3.8 of \cite{EMR}.
  Thus, we may use Cauchy's integral formula to calculate
  \begin{align*}
    |(v_{\xi}\cdot \ii\nabla_k )^n \omega_{\xi} (h)| &=
    \left|\frac{\mathrm{d}^{n-1}}{\mathrm{d}^{n-1}s}u_{\xi}^s(h)|_{s=0}\right| = \left|\frac{(n-1)!}{2\pi\ii}\int\limits_{\Gamma_{r}}\frac{u_{\xi}(\gamma_z^\xi(h))}{z^n}\mathrm{d}z\right|\\
    &\leq (n-1)!\frac{\sup_{z\in
        \Gamma_r}|u_{\xi}(\gamma_z^\xi(h))|}{r^{n-1}},
  \end{align*}
  where $\Gamma_r$ denotes the set $\{\abs{z}=r\}$, and, by abuse of
  notation, a path parametrizing this set in the counter-clockwise
  direction.  By applying Peetre's inequality,
  \begin{equation*}
    \forall q\in\R, k,h\in\C^d\colon\quad\jnorm{k+h}^q\le2^{\abs{q}}\jnorm{k}^{\abs{q}}\jnorm{h}^q,
  \end{equation*}
  see e.g.\ \cite[Lemma~1.18]{X}, the assumptions, and the estimate
  $\abs{\gamma_z^\xi(k)-k}\le C_\omega\abs{z}$, we may estimate
  \begin{equation*}
    \sup_{z\in\Gamma_r}\abs{u_\xi(\gamma_z^\xi(h))}\le \tilde C_\xi\jnorm{h}^{2s_2}e^{-h^2},
  \end{equation*}
  for some constant $\tilde C_\xi\ge1$, which can be chosen such that
  it depends continuously on $\xi\in\R^d$. Since $\tilde C_\xi\ge1$,
  we may define $C_\xi=\frac{\tilde C_\xi}{\max\{1,r\}}$ and conclude the
  statement.
\end{proof}

\section{Estimates on the interaction commutator}
\label{sec:interaction}
In this section, we make estimates on the commutator from
Theorem~\ref{thm:commutatorformula}, using the estimates established
in Sections~\ref{sec:betaell} and \ref{sec:redordfac} on the order
factorials, and in Section~\ref{sec:noofterms} on the number of
terms. We also use Proposition~\ref{prp_CommBoundsFree} from
Section~\ref{sec:freeham} to control the multiplication operators
$M_{\alpha, \beta}$.
\begin{lem}
  \label{lem:v_xi}
  There exists a constant $C_\xi'$ which depends continuously on
  $\xi\in\R^d$ such that for all $h\in\R^d$, all $n\in\N$, and all
  $\sigma\in\{1,2,\dotsc,d\}$, we have the pointwise estimate
  \begin{equation*}
    \abs{(\ii v_\xi\cdot\nabla)^n v_{\xi,\sigma}(h)}\le C_\xi'^nn!\jnorm{h}^{2s_2}e^{-h^2}.
  \end{equation*}
\end{lem}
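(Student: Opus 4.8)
The plan is to mimic, almost verbatim, the proof of the pointwise bound \eqref{eq:pw_est_omega} in Proposition~\ref{prp_CommBoundsFree}, replacing the function $\omega_\xi$ by the single component $v_{\xi,\sigma}$. First I would write the $n$-th iterated directional derivative as an $(n-1)$-th ordinary derivative along the flow of the vector field $v_\xi$: with $\gamma_s^\xi$ solving $\frac{\dd}{\dd s}\gamma_s^\xi(k)=v_\xi(\gamma_s^\xi(k))$, $\gamma_0^\xi(k)=k$, one has
\begin{equation*}
  (\ii v_\xi\cdot\nabla)^n v_{\xi,\sigma}(h)=\frac{\dd^{n-1}}{\dd s^{n-1}}\,\bigl(u_{\xi,\sigma}^s(h)\bigr)\big|_{s=0},\qquad u_{\xi,\sigma}^s(h):=(\ii v_\xi\cdot\nabla v_{\xi,\sigma})(\gamma_s^\xi(h)).
\end{equation*}
Exactly as in Proposition~\ref{prp_CommBoundsFree}, I would invoke Lemma~3.5 and Remark~3.8 of \cite{EMR}: the map $s\mapsto\gamma_s^\xi(h)$ extends analytically to a strip of some width $r>0$ independent of $h$ and $\xi$, with $\gamma_z^\xi(h)\in S_R^d$ and $\abs{\gamma_z^\xi(h)-h}\le C_\omega\abs{z}$ for $\abs{z}\le r$. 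Then Cauchy's integral formula over the circle $\Gamma_r=\{\abs z=r\}$ gives
\begin{equation*}
  \abs{(\ii v_\xi\cdot\nabla)^n v_{\xi,\sigma}(h)}\le(n-1)!\,\frac{\sup_{z\in\Gamma_r}\abs{u_{\xi,\sigma}^z(h)}}{r^{n-1}}\le n!\,\frac{\sup_{z\in\Gamma_r}\abs{u_{\xi,\sigma}^z(h)}}{r^{n-1}}.
\end{equation*}

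It then remains to produce the bound $\sup_{z\in\Gamma_r}\abs{u_{\xi,\sigma}^z(h)}\le\tilde C_\xi\jnorm h^{2s_2}e^{-h^2}$ with $\tilde C_\xi$ depending continuously on $\xi$, after which $C_\xi'=\tilde C_\xi/\max\{1,r\}$ finishes the proof. For this I would unfold the definition $v_\xi(k)=e^{-k^2-\xi^2}(\nabla_k\omega_\xi)(k)$, so that $\ii v_\xi\cdot\nabla v_{\xi,\sigma}$ is a sum of terms involving $e^{-2k^2-2\xi^2}$ times products of first and second derivatives of $\omega_\xi$ and the polynomial factors coming from differentiating $e^{-k^2}$. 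Each derivative of $\omega_\xi$ up to second order is controlled on $S_{2R}^d$ by Condition~\ref{cond_ex}\eqref{cond_ex_omega_est}; here one needs the bound on $\abs{\partial^\alpha\omega_j}$ for $\abs\alpha\le1$ applied to the two summands of $\omega_\xi(k)=\omega_1(\xi-k)+\omega_2(k)$, giving $\jnorm{\,\cdot\,}^{s_2}$-type growth, so two derivatives of $\omega_\xi$ give at most $\jnorm k^{2s_2}$ growth (using $s_2=p\ge s_1$). The Gaussian weight $e^{-2\re(k^2)}$, evaluated at $k=\gamma_z^\xi(h)$ with $\abs{\gamma_z^\xi(h)-h}\le C_\omega r$, dominates the polynomial factors and, by Peetre's inequality $\jnorm{\gamma_z^\xi(h)}^{q}\le2^{\abs q}\jnorm{\gamma_z^\xi(h)-h}^{\abs q}\jnorm h^{q}\le 2^{\abs q}\jnorm{C_\omega r}^{\abs q}\jnorm h^q$ exactly as used in Proposition~\ref{prp_CommBoundsFree}, collapses to $\tilde C_\xi\jnorm h^{2s_2}e^{-h^2}$; the $e^{-\xi^2}$ factors and the $h$-independent constants can be absorbed into $\tilde C_\xi$, and tracking the dependence shows $\tilde C_\xi$ is continuous (indeed bounded) in $\xi$.

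The only real work, and hence the main obstacle, is this last uniform estimate on $\sup_{z\in\Gamma_r}\abs{u_{\xi,\sigma}^z(h)}$: one must check that differentiating the exponential prefactor $e^{-k^2-\xi^2}$ inside $v_\xi$ produces only polynomially growing factors that are beaten by the surviving Gaussian, and that the real part $\re(\gamma_z^\xi(h)^2)$ stays comparable to $h^2$ for $\abs z\le r$ — both of which are exactly the points already handled in the proof of Proposition~\ref{prp_CommBoundsFree} (this is why the excerpt remarks that ``the exponential weight in both $x$ and $\xi$ is added to make the uniform estimates work''). Everything else is a line-by-line transcription of that earlier proof, so I would simply write ``the proof is identical to that of Proposition~\ref{prp_CommBoundsFree} with $\omega_\xi$ replaced by $v_{\xi,\sigma}$'' and then spell out the one extra observation that $\ii v_\xi\cdot\nabla v_{\xi,\sigma}$ still satisfies the required $\jnorm h^{2s_2}e^{-h^2}$ bound on the relevant strip.
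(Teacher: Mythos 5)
Your proposal is correct and takes exactly the same route as the paper: the paper's own proof consists of the single sentence ``Mimic the proof of Proposition~\ref{prp_CommBoundsFree} with $\omega_\xi$ replaced by $v_{\xi,\sigma}$,'' which is precisely what you do. You additionally spell out the one genuine point to check --- that $\ii v_\xi\cdot\nabla v_{\xi,\sigma}$ still obeys the $\jnorm{h}^{2s_2}e^{-h^2}$ bound on the strip after the extra derivative hits the exponential prefactor in $v_\xi$ --- which the paper leaves implicit.
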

\begin{proof}
  Mimic the proof of Proposition~\ref{prp_CommBoundsFree} with
  $\omega_\xi$ replaced by $v_{\xi,\sigma}$.
\end{proof}
\begin{prp}[Estimates on $M_{\alpha,\beta}$]
  \label{prp:estimatesMab}
  Let $\alpha$ and $\beta$
  be $1$- and $d$-dimensional polyindices, respectively, and write
  $f_{\alpha,\beta}=\mathbb{D}_v w^{\alpha}\mathbb{D}_v
  v_{\xi}^{\beta}$, such that $M_{\alpha,\beta}=M_{f_{\alpha,\beta}}$. Then we have the following pointwise estimate
  \begin{equation*}
    \abs{f_{\alpha,\beta}(h)}\le C_\xi''^{\norm{\alpha}+\norm{\beta}}(\jnorm{h}^{2s_2}e^{-h^2})^{\abs{\alpha}+\abs{\beta}}\alpha\bar!\beta\bar!\,,
  \end{equation*}
  where $C_\xi''=\max\{C_\xi,C_\xi'\}$ and $C_\xi$ and $C_\xi'$ are
  the constants from Proposition~\ref{prp_CommBoundsFree} and
  Lemma~\ref{lem:v_xi}, respectively.
\end{prp}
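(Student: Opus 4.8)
\emph{Strategy.} The plan is to estimate separately each of the finitely many nonzero factors in the product
\[
  f_{\alpha,\beta}(h)=\prod_{i=0}^\infty\bigl(D_v^i w(h)\bigr)^{\alpha(i)}\,\prod_{i=0}^\infty\prod_{\sigma=1}^d\bigl(D_v^i v_{\xi,\sigma}(h)\bigr)^{\beta_\sigma(i)},\qquad D_v=\ii v_\xi\cdot\nabla,
\]
and then to collect the resulting powers of the constant, of $\jnorm{h}^{2s_2}e^{-h^2}$, and of the factorials $i!$. Everything reduces to the single per-factor bound: there is a constant $C_\xi''\ge1$, depending continuously on $\xi$, such that
\[
  \abs{D_v^i g(h)}\le C_\xi''^{\,i}\,i!\,\jnorm{h}^{2s_2}e^{-h^2}\qquad\text{for all }i\in\N_0,\ h\in\R^d,\ g\in\{w,v_{\xi,1},\dots,v_{\xi,d}\}.
\]

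\emph{The per-factor bound.} For $g=v_{\xi,\sigma}$ and $i\ge1$ this is exactly Lemma~\ref{lem:v_xi}; for $i=0$ it is the immediate pointwise bound coming from $v_\xi(k)=e^{-k^2-\xi^2}(\nabla_k\omega_\xi)(k)$ together with the derivative bounds in Condition~\ref{cond_ex}. For $g=w=\tfrac{\ii}{2}\div(v_\xi)$ I would rerun the proof of Proposition~\ref{prp_CommBoundsFree} with $\omega_\xi$ replaced by $w$: being a first-order differential expression in $v_\xi$, the function $w$ is a combination of $\omega_\xi$, its first and second derivatives, and the weight $e^{-k^2-\xi^2}$, hence it extends analytically to the strip $S_R^d$ and obeys a polynomial-growth-times-Gaussian bound there; the Cauchy integral representation of $D_v^i w(h)$ along the flow $\gamma_z^\xi(h)$, combined with Peetre's inequality exactly as in Proposition~\ref{prp_CommBoundsFree}, then produces the asserted estimate for $w$. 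Finally one may take $C_\xi''$ to be $\ge1$ and $\ge\max\{C_\xi,C_\xi'\}$ (replacing $C_\xi,C_\xi'$ by their maximum with $1$, which only weakens Proposition~\ref{prp_CommBoundsFree} and Lemma~\ref{lem:v_xi}), also absorbing the constant produced for $w$.

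\emph{Collecting the powers.} Raising the per-factor bound for $D_v^i w$ to the power $\alpha(i)$ and multiplying over the finitely many relevant $i$,
\[
  \Bigl\lvert\,\prod_{i=0}^\infty\bigl(D_v^i w(h)\bigr)^{\alpha(i)}\Bigr\rvert\le C_\xi''^{\,\sum_i i\alpha(i)}\Bigl(\,\prod_{i=0}^\infty (i!)^{\alpha(i)}\Bigr)\bigl(\jnorm{h}^{2s_2}e^{-h^2}\bigr)^{\sum_i\alpha(i)},
\]
and then I would use $\sum_i\alpha(i)=\abs{\alpha}$, $\prod_i(i!)^{\alpha(i)}=\alpha\bar!$, and $\sum_i i\alpha(i)=\norm{\alpha}-\abs{\alpha}$ (from $\norm{\alpha}=\sum_i\alpha(i)(i+1)$), together with $C_\xi''\ge1$ and $\abs{\alpha}\ge0$, to bound the right-hand side by $C_\xi''^{\norm{\alpha}}\,\alpha\bar!\,(\jnorm{h}^{2s_2}e^{-h^2})^{\abs{\alpha}}$. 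The same computation for the $v_\xi$-product, now with the double product over $i$ and $\sigma$ and the analogous identities for $\beta$, yields $C_\xi''^{\norm{\beta}}\,\beta\bar!\,(\jnorm{h}^{2s_2}e^{-h^2})^{\abs{\beta}}$. Multiplying the two estimates gives precisely the claim.

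\emph{Main obstacle.} The only step with real content is establishing the per-factor bound for $w=\tfrac{\ii}{2}\div(v_\xi)$, i.e.\ verifying that it falls under the Cauchy-integral-along-the-flow argument of Proposition~\ref{prp_CommBoundsFree}; concretely one needs that the second derivatives of $\omega_\xi$ still satisfy $\jnorm{k}^{s_2}$-type bounds on a strip, which follows from the analyticity and first-order bounds in Condition~\ref{cond_ex} by a Cauchy estimate. Everything after that is the elementary exponent count above plus the trivial inequality $\abs{\,\cdot\,}\le\norm{\,\cdot\,}$ for polyindices.
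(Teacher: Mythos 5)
Your proof is correct and follows essentially the same strategy as the paper: factorize $f_{\alpha,\beta}=\prod_i(D_v^iw)^{\alpha(i)}\prod_{i,\sigma}(D_v^iv_{\xi,\sigma})^{\beta_\sigma(i)}$, establish a per-factor bound of Cauchy type, and collect the powers using $\norm{\alpha}=\sum_i(i+1)\alpha(i)$, $\abs{\alpha}=\sum_i\alpha(i)$, $\alpha\bar!=\prod_i(i!)^{\alpha(i)}$. The collection step and the treatment of the $v_{\xi,\sigma}$-factors via Lemma~\ref{lem:v_xi} coincide with the paper's argument.

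The one place you do something genuinely different, and arguably better, is the per-factor bound for $w=\tfrac{\ii}{2}\Div(v_\xi)$. The paper's proof dispatches this with the line ``Put $w_\xi^0(k):=w(k)$ and note that $D_v^n\omega_\xi = D_v^{n-1}w_\xi$'' and then cites Proposition~\ref{prp_CommBoundsFree}. Taken literally that identity fails, since $D_v\omega_\xi=\ii\mathrm{e}^{-k^2-\xi^2}\abs{\nabla\omega_\xi}^2$ is not $\tfrac{\ii}{2}\Div(v_\xi)$; it is evidently shorthand for the claim that the Cauchy-along-the-flow argument of Proposition~\ref{prp_CommBoundsFree} applies with $\omega_\xi$ replaced by $w$. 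You make that precise: you observe that $w$ extends analytically to a strip (being an expression in $\omega_\xi$, $\nabla\omega_\xi$, $\Delta\omega_\xi$ and the Gaussian weight), that the needed second-derivative bounds follow from Condition~\ref{cond_ex} by a Cauchy estimate on a smaller strip, and then rerun the flow/Cauchy computation verbatim. This is a cleaner and more self-contained treatment of the only step with real content.

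One small imprecision to fix: your per-factor bound is stated as $\abs{D_v^ig(h)}\le C_\xi''^{\,i}\,i!\,\jnorm{h}^{2s_2}\mathrm{e}^{-h^2}$ for all $i\in\N_0$, which at $i=0$ asserts $\abs{g(h)}\le\jnorm{h}^{2s_2}\mathrm{e}^{-h^2}$ with no constant at all; neither $w$ nor $v_{\xi,\sigma}$ is guaranteed to satisfy this with constant $1$. The natural fix is to state the per-factor bound as $\abs{D_v^ig(h)}\le C_\xi''^{\,i+1}\,i!\,\jnorm{h}^{2s_2}\mathrm{e}^{-h^2}$. Then raising to the power $\alpha(i)$ and multiplying over $i$ produces $C_\xi''^{\sum_i(i+1)\alpha(i)}=C_\xi''^{\norm{\alpha}}$ exactly, which both repairs the $i=0$ case and makes your step of discarding the $-\abs{\alpha}$ in the exponent unnecessary. (The paper's own write-up has the same implicit issue at $i=0$.)
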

\begin{proof}
  Note that $M_{\alpha,\beta}=M_{\mathbb{D}_v
    w^{\alpha}}M_{\mathbb{D}_v v_{\xi}^{\beta}}$, and that we may
  write
  \begin{align*}
    M_{\mathbb{D}_v w^{\alpha}} = \prod_{i=0}^{\infty}M_{(D_v^i
      w)^{\alpha(i)}}. 
  \end{align*}
  Put $w_\xi^0(k):=w(k)$ and note that $((\ci v_\xi\cdot\nabla_k)^n\omega_\xi)(k) = ((\ci v_\xi\cdot\nabla_k)^{n-1}w_\xi)(k)$. We now use Proposition~\ref{prp_CommBoundsFree} to get the
  pointwise estimate
  \begin{align*}
    \abs{\mathbb{D}_v w^{\alpha}(h)} &=
    \prod_{i=0}^{\infty}\abs{((D_v^i w)(h))^{\alpha(i)}} \leq \prod_{i=0}^{\infty} (C_\xi^i
    i!\jnorm{h}^{2s_2}e^{-h^2})^{\alpha(i)}
    \\
    &\leq C_\xi^{\|\alpha\|}(\jnorm{h}^{2s_2}e^{-h^2})^{\abs{\alpha}}
    \prod_{i=0}^{\infty}i!^{\alpha(i)}.
  \end{align*}
  Likewise, we note that
  \begin{equation*}
    M_{\mathbb{D}_v v_{\xi}^{\beta} }= \prod_{\sigma=1}^d
    \prod_{i=0}^{\infty} M_{(D_v^iv_{\xi,\sigma})^{\beta_{\sigma}(i)}}
  \end{equation*}
  which we use to compute the pointwise estimate as before
  \begin{equation*}
    \abs{M_{\mathbb{D}_v v_{\xi}^{\beta} }(h)} \le C_\xi'^{\norm{\beta}}(\jnorm{h}^{2s_2}e^{-h^2})^{\abs{\beta}}\prod_{\sigma=1}^d\prod_{i=1}^\infty i!^{\beta_\sigma(i)}.
  \end{equation*}
  Combining these two estimates now gives the proposition.
\end{proof}
\begin{prp}
  \label{prp:commutatorestimate1}
  If for all $d$-dimensional polyindices $\beta,b\in C_0(\N_0,\N_0^d)$
  with total order less than $k$, $\norm{\beta}+\norm{b}\le k$, the
  forms
  \begin{equation*}
    M_{(\jnorm{\cdot}^{2s_2}e^{-(\cdot)^2})^{\abs{\beta}}}T_{\beta,b}M_{(\jnorm{\cdot}^{2s_2}e^{-(\cdot)^2})^{\abs{\beta}}}
  \end{equation*}
  on $C_0^\infty(\R^d)$ extend to bounded operators on $\LL^2(\R^d)$,
  then so do the forms in \eqref{eq:commutatorformula} from
  Theorem~\ref{thm:commutatorformula} (with
  $M_{\alpha,\beta}T_{\beta,b}M_{a,b}$ interpreted as the bounded
  operator given by the form). Furthermore, we have the following
  estimate on $\ad_A^k(T_0)(H_0(\xi)+\ii)\inv$:
  \begin{equation}
    \label{eq:commutatorestimate1}
    \begin{split}
      \MoveEqLeft
      \norm{\ad_A^k(T_0)(H_0(\xi)+\ii)\inv}\\&\le\smashoperator{\sum_{\substack{\alpha,\beta,a,b
            \\
            \norm{\alpha}+\norm{\beta}+\norm{a}+\norm{b}=k}}}C_\xi''^k\frac{C^{\norm{\beta}+\norm{b}}k!}{c^{\abs{\gamma_{\beta+b}}}\gamma_{\beta+b}!
      }\norm{M_{(\jnorm{\cdot}^{2s_2}e^{-(\cdot)^2})^{\abs{\alpha}+\abs{\beta}}}T_{\beta,b}M_{(\omega_\xi+\ii)\inv(\jnorm{\cdot}^{2s_2}e^{-(\cdot)^2})^{\abs{a}+\abs{b}}}},
    \end{split}
  \end{equation}
  where $C_\xi''$ is the constant from
  Proposition~\ref{prp:estimatesMab}, which depends continuously on
  $\xi$, and $c$ and $C$ are the constants from Lemma~\ref{lem:redordfac}.
\end{prp}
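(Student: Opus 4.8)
The plan is to begin from the closed commutator formula \eqref{eq:commutatorformula} of Theorem~\ref{thm:commutatorformula}, compose it on the right with $(H_0(\xi)+\ii)\inv$, and estimate the resulting finite sum term by term. To streamline the argument, write $\varphi(h)=\jnorm{h}^{2s_2}e^{-h^2}$; this is a bounded, strictly positive function on $\R^d$, so $M_{\varphi^m}$ is a bounded operator for every real $m\ge0$, and $M_{\varphi^m}=M_{\varphi^{m-m'}}M_{\varphi^{m'}}$ with $\norm{M_{\varphi^{m-m'}}}\le\norm{\varphi}_\infty^{m-m'}$ whenever $m\ge m'\ge0$. Moreover, since $H_0(\xi)=M_{\omega_\xi}$ with $\omega_\xi$ real, $(H_0(\xi)+\ii)\inv=M_{(\omega_\xi+\ii)\inv}$ is a bounded multiplication operator of norm $\le1$.

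First I would rewrite a single summand of \eqref{eq:commutatorformula} after composition with $(H_0(\xi)+\ii)\inv$. Writing $M_{\alpha,\beta}=M_{f_{\alpha,\beta}}$ and $M_{a,b}=M_{f_{a,b}}$, Proposition~\ref{prp:estimatesMab} supplies the pointwise bounds $\abs{f_{\alpha,\beta}}\le C_\xi''^{\norm{\alpha}+\norm{\beta}}\varphi^{\abs{\alpha}+\abs{\beta}}\alpha\bar!\beta\bar!$ and $\abs{f_{a,b}}\le C_\xi''^{\norm{a}+\norm{b}}\varphi^{\abs{a}+\abs{b}}a\bar!b\bar!$; since $\varphi>0$, we may therefore factor $f_{\alpha,\beta}=g_{\alpha,\beta}\,\varphi^{\abs{\alpha}+\abs{\beta}}$ and $f_{a,b}=g_{a,b}\,\varphi^{\abs{a}+\abs{b}}$ with $\norm{M_{g_{\alpha,\beta}}}\le C_\xi''^{\norm{\alpha}+\norm{\beta}}\alpha\bar!\beta\bar!$ and $\norm{M_{g_{a,b}}}\le C_\xi''^{\norm{a}+\norm{b}}a\bar!b\bar!$. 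All operators in sight are multiplication operators and therefore commute, so the $(\alpha,\beta,a,b)$-term of $\ad_A^k(T_0)(H_0(\xi)+\ii)\inv$ equals
\begin{equation*}
\frac{k!}{\alpha\ddot!\beta\ddot!a\ddot!b\ddot!}\,M_{g_{\alpha,\beta}}\Bigl[M_{\varphi^{\abs{\alpha}+\abs{\beta}}}\,T_{\beta,b}\,M_{(\omega_\xi+\ii)\inv\varphi^{\abs{a}+\abs{b}}}\Bigr]M_{g_{a,b}}.
\end{equation*}

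Next I would show the bracketed operator is bounded; this is where the hypothesis enters and where the only genuine subtlety lies, since the hypothesis controls $T_{\beta,b}$ only when flanked by the balanced weight $\varphi^{\abs{\beta}}$ on both sides, whereas here the weights are $\varphi^{\abs{\alpha}+\abs{\beta}}$ on the left and $(\omega_\xi+\ii)\inv\varphi^{\abs{a}+\abs{b}}$ on the right. Because $T_{\beta,b}$ depends on $(\beta,b)$ only through $\gamma_{\beta+b}$ and the sign $(-1)^{\abs{b}}$, it agrees with $T_{b,\beta}$ up to sign; applying the hypothesis to both $(\beta,b)$ and $(b,\beta)$ --- both having total order $\le\norm{\alpha}+\norm{\beta}+\norm{a}+\norm{b}=k$ --- shows that $M_{\varphi^{\abs{\beta}}}T_{\beta,b}M_{\varphi^{\abs{\beta}}}$ and $M_{\varphi^{\abs{b}}}T_{\beta,b}M_{\varphi^{\abs{b}}}$ are both bounded. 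If $\abs{a}+\abs{b}\ge\abs{\beta}$, I factor the bracket as $M_{\varphi^{\abs{\alpha}}}\bigl[M_{\varphi^{\abs{\beta}}}T_{\beta,b}M_{\varphi^{\abs{\beta}}}\bigr]M_{(\omega_\xi+\ii)\inv\varphi^{\abs{a}+\abs{b}-\abs{\beta}}}$, a product of three bounded operators, all $\varphi$-exponents being $\ge0$. If instead $\abs{a}+\abs{b}<\abs{\beta}$, then $\abs{b}<\abs{\beta}\le\abs{\alpha}+\abs{\beta}$, and I factor it as $M_{\varphi^{\abs{\alpha}+\abs{\beta}-\abs{b}}}\bigl[M_{\varphi^{\abs{b}}}T_{\beta,b}M_{\varphi^{\abs{b}}}\bigr]M_{(\omega_\xi+\ii)\inv\varphi^{\abs{a}}}$, again a product of three bounded operators; the two cases are exhaustive because both conditions failing would force $\abs{a}<0$. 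Hence the bracket, and therefore each form $M_{\alpha,\beta}T_{\beta,b}M_{a,b}$ in \eqref{eq:commutatorformula}, extends to a bounded operator, and so does the finite sum $\ad_A^k(T_0)$.

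Finally, for the quantitative bound \eqref{eq:commutatorestimate1} I would take operator norms in the displayed identity, apply the triangle inequality over the (finite) index set, and simplify. Using $\norm{M_{g_{\alpha,\beta}}}\,\norm{M_{g_{a,b}}}\le C_\xi''^{k}\,\alpha\bar!\beta\bar!a\bar!b\bar!$ (since $\norm{\alpha}+\norm{\beta}+\norm{a}+\norm{b}=k$) together with the estimate $\frac{k!}{\alpha\ddot!\beta\ddot!a\ddot!b\ddot!}\le\frac{C^{\norm{\beta}+\norm{b}}k!}{c^{\abs{\gamma_{\beta+b}}}\gamma_{\beta+b}!\,\alpha\bar!\beta\bar!a\bar!b\bar!}$ --- obtained exactly as in Corollary~\ref{cor:redordfac} from Lemma~\ref{lem:redordfac} and the trivial bounds $\alpha\dot!,a\dot!\ge1$, but keeping the factors $\alpha\bar!$ and $a\bar!$ in the denominator --- the reduced-order-factorial factors $\alpha\bar!\beta\bar!a\bar!b\bar!$ cancel, leaving precisely the summand on the right-hand side of \eqref{eq:commutatorestimate1}. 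Continuity of the overall constant in $\xi$ is then inherited from that of $C_\xi''$ in Proposition~\ref{prp:estimatesMab}. Apart from the boundedness reduction above, every step is routine bookkeeping; that reduction --- making the balanced-weight hypothesis do the work for the asymmetric weights that actually occur, using only that $\varphi$ is bounded and that $T_{\beta,b}$ is symmetric in $\beta$ and $b$ up to sign --- is the single point that needs real care.
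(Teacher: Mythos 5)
Your proof is correct and uses exactly the ingredients the paper cites in its one-line proof (Theorem~\ref{thm:commutatorformula}, Corollary~\ref{cor:redordfac}, and Proposition~\ref{prp:estimatesMab}), so it is essentially the same approach. The two details you work out that the paper leaves implicit --- reducing the asymmetric weights $\varphi^{\abs{\alpha}+\abs{\beta}}$ and $\varphi^{\abs{a}+\abs{b}}$ to the hypothesis's balanced $\varphi^{\abs{\beta}}$--$\varphi^{\abs{\beta}}$ configuration via the symmetry $T_{\beta,b}=\pm T_{b,\beta}$ and the two-case split, and re-running Lemma~\ref{lem:redordfac} while retaining the $\alpha\bar!$, $a\bar!$ factors that Corollary~\ref{cor:redordfac} drops so that they cancel against Proposition~\ref{prp:estimatesMab} --- are both handled correctly, and the continuity in $\xi$ is traced correctly to $C_\xi''$.
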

\begin{proof}
  This follows easily from Theorem~\ref{thm:commutatorformula}, Theorem~\ref{cor:redordfac}, and Proposition~\ref{prp:estimatesMab}.
\end{proof}

\begin{thm}
  \label{thm:interaction}
 \leavevmode
  \begin{enumerate}
 \item
  Assume that for some $c>0$ and all pairs of polyindices $\beta,b$,
  \begin{equation}\label{eq:MTM}
    \norm{M_{(\jnorm{\cdot}^{2s_2}e^{-(\cdot)^2})^{\abs{\beta}}}T_{\beta,b}M_{(\omega_\xi+\ii)\inv(\jnorm{\cdot}^{2s_2}e^{-(\cdot)^2})^{\abs{b}}}}\le c^{\abs{\beta+b}}\gamma_{\beta,b}!\,,
  \end{equation}
  where $\gamma_{\beta,b}$ is given as in \eqref{eq:gammabetab}, and 
  $M_{(\jnorm{\cdot}^{2s_2}e^{-(\cdot)^2})^{\abs{\beta}}}T_{\beta,b}M_{(\omega_\xi+\ii)\inv(\jnorm{\cdot}^{2s_2}e^{-(\cdot)^2})^{\abs{b}}}$
  is interpreted as the bounded operator given by the form on
  $C_0^\infty(\R^d)$. Then there exists a constant $C>1$ such that
  \begin{equation*}
    \norm{\ad_A^k(T_0)(H_0(\xi)+\ii)\inv}\le (CC_\xi''c_d)^kk!\,
  \end{equation*}
  and the constant $C_\xi''$, which depends continuously on $\xi$,
  comes from Proposition~\ref{prp:estimatesMab} and $c_d$ comes from
  Proposition~\ref{prp:noofterms}. 
  
\item  In particular, \eqref{eq:MTM} holds, if there
  exists a constant $c>0$ such that for all $\beta,b$,
  \begin{align}\label{eq:T_sufficient}
    \norm{T_{\beta,b}(H_0(\xi)+\ii)\inv}\le c^{\abs{\beta+b}}\gamma_{\beta,b}!.
  \end{align}
\item Suppose that $V$ is as in either Condition \ref{cond:Lsw} or
  Condition \ref{cond:L1}. Then \eqref{eq:T_sufficient} is satisfied.
\end{enumerate}

\end{thm}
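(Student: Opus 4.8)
The plan is to prove the three parts in order; (1) and (2) are bookkeeping on top of results already established, whereas (3) carries the analytic content. For (1) I would start from the estimate \eqref{eq:commutatorestimate1} of Proposition~\ref{prp:commutatorestimate1}. Write $G_m$ for multiplication by $(\jnorm{\cdot}^{2s_2}\mathrm{e}^{-(\cdot)^2})^{m}$ and $M_0:=\norm{\jnorm{\cdot}^{2s_2}\mathrm{e}^{-(\cdot)^2}}_\infty$, which is finite and $\ge1$. Since multiplication operators commute, the operator in the $(\alpha,\beta,a,b)$ summand of \eqref{eq:commutatorestimate1} equals $G_{\abs{\alpha}}\bigl(G_{\abs{\beta}}T_{\beta,b}(H_0(\xi)+\ii)\inv G_{\abs{b}}\bigr)G_{\abs{a}}$; the outer factors have norm $\le M_0^{\abs{\alpha}}$, resp.\ $\le M_0^{\abs{a}}$, and the bracketed operator is precisely the left-hand side of hypothesis \eqref{eq:MTM}, hence $\le c^{\abs{\beta+b}}\gamma_{\beta,b}!$. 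Since $\gamma_{\beta+b}!=\gamma_{\beta,b}!$ and $\abs{\gamma_{\beta+b}}=\abs{\beta}+\abs{b}=\abs{\beta+b}$, this factorial and the power of $c$ cancel the matching factors in \eqref{eq:commutatorestimate1}, leaving each summand bounded by $C_\xi''^{k}C^{\norm{\beta}+\norm{b}}M_0^{\abs{\alpha}+\abs{a}}k!\le(C_\xi''\,C\,M_0)^{k}k!$ (using $\norm{\beta}+\norm{b}\le k$, $\abs{\alpha}+\abs{a}\le\norm{\alpha}+\norm{a}\le k$, and $C,M_0\ge1$). As Proposition~\ref{prp:noofterms} bounds the number of summands by $c_d^{k}$, the total is $\le(C_\xi''\,C\,M_0\,c_d)^{k}k!$, and absorbing $M_0$ into $C$ gives the claim.

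For (2) I would use that $(H_0(\xi)+\ii)\inv=M_{(\omega_\xi+\ii)\inv}$, so that, again by commutativity of multiplication operators, the left-hand side of \eqref{eq:MTM} equals $G_{\abs{\beta}}\,T_{\beta,b}(H_0(\xi)+\ii)\inv\,G_{\abs{b}}$; taking operator norms, using $\norm{G_m}\le M_0^{m}$ and hypothesis \eqref{eq:T_sufficient}, and recalling $\abs{\beta}+\abs{b}=\abs{\beta+b}$, yields \eqref{eq:MTM} with $c$ replaced by $M_0c$. (All these objects are a priori only forms on $C_0^\infty(\R^d)$; the point is that the right-hand side is a bounded operator, which then realises the form.)

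For (3), the first step is to identify $T_{\beta,b}$ as a genuine convolution operator: using that $(x^{\gamma}f)^\wedge$ is a unimodular multiple of $(\hat f)^{(\gamma)}$ and that $(\bar V)^\wedge(k)=\hat V(-k)$, one finds $T_{\beta,b}=S_g$ with $g$ a unimodular multiple of $k\mapsto\hat V^{(\gamma_{\beta+b})}(-k)$, so that $\norm{g}_1=\norm{\hat V^{(\gamma_{\beta+b})}}_1$ and $\norm{g}_{s,w}=\norm{\hat V^{(\gamma_{\beta+b})}}_{s,w}$. Under Condition~\ref{cond:L1}, Young's inequality gives $\norm{T_{\beta,b}}\le\norm{g}_1\le\gamma_{\beta+b}!\,c'^{\abs{\gamma_{\beta+b}}}$, and since $\norm{(H_0(\xi)+\ii)\inv}\le1$, $\gamma_{\beta+b}!=\gamma_{\beta,b}!$ and $\abs{\gamma_{\beta+b}}=\abs{\beta+b}$, this is \eqref{eq:T_sufficient} with $c=c'$. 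Under Condition~\ref{cond:Lsw} one brings in the resolvent: from the lower bound on $\abs{\omega_\xi}$ in Condition~\ref{cond_ex} (cf.\ \eqref{eq_omega}) one extracts $\abs{(\omega_\xi(k)+\ii)\inv}\le C_\xi\jnorm{k}^{-p}$ with $p=s_2$, so $M_{(\omega_\xi+\ii)\inv}$ maps $\LL^2(\R^d)$ into $\LL^q(\R^d)$ by H\"older whenever $\tfrac1q=\tfrac12+\tfrac1r$ with $\jnorm{\cdot}^{-p}\in\LL^r(\R^d)$, i.e.\ $\tfrac1r<\tfrac pd$; while the weak Young inequality makes $S_g\colon\LL^q(\R^d)\to\LL^2(\R^d)$ bounded with norm $\le C_s\norm{g}_{s,w}$ for $1<s,q<\infty$ and $\tfrac12+1=\tfrac1s+\tfrac1q$. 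Matching exponents forces $\tfrac1q=\tfrac32-\tfrac1s$ and $\tfrac1r=1-\tfrac1s$, and the requirements $q,s\in(1,\infty)$, $\tfrac1r<\tfrac pd$ are exactly what the hypotheses $s>1$ and $\max\{\tfrac12,1-\tfrac pd\}<\tfrac1s$ of Condition~\ref{cond:Lsw} furnish; composing the two bounds gives $\norm{T_{\beta,b}(H_0(\xi)+\ii)\inv}\le C_{s,\xi}\norm{\hat V^{(\gamma_{\beta+b})}}_{s,w}\le C_{s,\xi}\,\gamma_{\beta,b}!\,c'^{\abs{\beta+b}}$, a bound of the form \eqref{eq:T_sufficient} (the overall constant $C_{s,\xi}$, continuous in $\xi$, being absorbed into $c$).

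I expect the genuine obstacle to lie in the weak-$\LL^s$ case of (3): one must organise the H\"older/weak-Young interpolation so that the decay $\jnorm{k}^{-p}$ of the resolvent exactly makes up for $\hat V^{(\gamma)}$ lying only in $\LL^s_w$ (and neither in $\LL^1$ nor in $\LL^2$), extract the decay estimate $\abs{(\omega_\xi(k)+\ii)\inv}\le C_\xi\jnorm{k}^{-p}$ from Condition~\ref{cond_ex} with $C_\xi$ continuous in $\xi$, keep the weak-Young constant uniform in $\beta,b$ so that only the clean factor $c'^{\abs{\beta+b}}\gamma_{\beta,b}!$ survives, and confirm that the numerology of Condition~\ref{cond:Lsw}---in particular $p>d\bigl(1-\tfrac1s\bigr)$, which renders $\jnorm{\cdot}^{-p}$ integrable to the needed power---is precisely what makes both steps run. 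Parts (1), (2) and the $\LL^1$ case are then routine, resting on Proposition~\ref{prp:commutatorestimate1} and Corollary~\ref{cor:redordfac}.
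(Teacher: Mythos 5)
Your proposal is correct and follows essentially the same route as the paper: parts (1)--(2) are bookkeeping on top of Proposition~\ref{prp:commutatorestimate1}, Remark~\ref{rem:redordfac}, and Proposition~\ref{prp:noofterms}, and part (3) rests on the weak Young inequality combined with H\"older and the resolvent decay $\abs{(\omega_\xi+\ii)\inv}\lesssim\jnorm{\cdot}^{-p}$, just as in the paper. The only cosmetic difference is that in the $\LL^s_w$ case you compose a bounded map $\LL^2\to\LL^q$ with a bounded map $\LL^q\to\LL^2$, whereas the paper bounds the trilinear form $\langle\phi,T_{\beta,b}(H_0+\ii)\inv\psi\rangle$ in one stroke; the numerology and constants are identical, and you are in fact slightly more explicit than the paper about where the pointwise resolvent decay enters.
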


\begin{proof}
  The first part of the statement follows immediately from
  Remark~\ref{rem:redordfac}, Proposition~\ref{prp:noofterms}, and
  Proposition~\ref{prp:commutatorestimate1}. The second statement
  follows from observing that if
  $\norm{T_{\beta,b}(H_0(\xi)+\ii)\inv}<\infty$, then
  \begin{equation*}
    \norm{M_{(\jnorm{\cdot}^{2s_2}e^{-(\cdot)^2})^{\abs{\beta}}}T_{\beta,b}M_{(\omega_\xi+\ii)\inv(\jnorm{\cdot}^{2s_2}e^{-(\cdot)^2})^{\abs{b}}}}\le c_M^{\abs{\beta+b}}\norm{T_{\beta,b}(H_0(\xi)+\ii)\inv}
  \end{equation*}
  for some $C_M>0$. The third statement can be seen to be correct by
  the following argument. Let $\hat{V}$ satisfy Condition
  \ref{cond:Lsw}. We introduce the shorthand $j_p(k):=(1+\langle k
  \rangle^p)\inv$, where $p:=s_2$, see Condition~\ref{cond_ex}. By the
  weak Young inequality, see~\cite[p. 107]{LiebLoss},
  \begin{align*}
  \norm{T_{\beta,b}(H_0(\xi)+\ii)\inv} 
  &= 
  \sup_{\substack{\phi,\psi\in\cH\\ \|\phi\|=\|\psi\|=1}}|\langle \phi, T_{\beta,b}(H_0(\xi)+\ii)\inv \psi\rangle.\\
  &\leq 
  \|\hat{V}^{(\gamma_{\beta,b})}\|_{s,w} \sup_{\substack{\psi\in\cH\\ \|\psi\|=1}} \|j_p \psi \|_t,
  \end{align*}
where $\frac{1}{s} + \frac{1}{t} = \frac{3}{2}$. Due to $s\in(1,2)$, $t\in (1,2)$. Now by H\"older's inequality
\begin{align*}
\|j_p \psi \|_t \leq \|\psi\|_2 \|j_p^t\|_{\frac{2}{2-t}}^{\frac{1}{t}}
\end{align*}
which is finite, if $-p\frac{2t}{2-t}+d<0$. This however is equivalent to $p>d(1-\frac{1}{s})$ which is true by assumption. Thus,
\begin{align*}
  \norm{T_{\beta,b}(H_0(\xi)+\ii)\inv} \leq \|\hat{V}^{(\gamma_{\beta,b})}\|_{s,w}\|j_p^t\|_{\frac{2}{2-t}}^{\frac{1}{t}}
\end{align*}
and we see that the third statement follows if we assume
Condition~\ref{cond:Lsw}. If V satisfies Condition \ref{cond:L1}, the
proof is similar and can be carried out directly by applying (the
ordinary) Young inequality.
\end{proof}


\section{The Yukawa potential}\label{sec:Yukawa}

Before proving Proposition \ref{prp:yuk} we will introduce some notation. Let $d\in\N$ and $D_1,\dots,D_d$ be discs in $\C$ of radius $r$. We then define $D := D_1\times \cdots\times D_d$. We denote by $\Gamma$ the distinguished boundary of $D$, that is $\Gamma = \partial D_1\times \cdots\times \partial D_d$. Moreover, for any $\alpha \in\N^d$ and $z\in\mathds{C}^d$ we define $z^\alpha:=\prod_{j=1}^{d}z_j^{\alpha_j}$. For an analytic function $f : U\subset \C^d \rightarrow \C^d$ we denote by $f^{(\alpha)}$ the iterated partial derivatives of $f$ corresponding to the multi-index $\alpha$, that is $\alpha_j$ derivatives w.r.t. the $j$-th variable. If we denote by $\alpha+\mathbf{1}\in\N^d$ the multi-index with whose $j$-th coordinate is $\alpha_j+1$, the $d$-dimensional generalization of Cauchy's formula is then
\begin{align*}
f^{(\alpha)}(z)= \int_{\Gamma}\frac{f(w)}{(z-w)^{\alpha+\mathbf{1}}}\mathrm{d}^dw.
\end{align*}
Having taken care of these notational issues we can provide a proof of Proposition \ref{prp:yuk}.

\begin{proof}[Proof of Proposition~\ref{prp:yuk}]
	Clearly $\hat{V}$ has an extension to an analytic function into the 3-dimensional strip $S_1^3:=\{z\in\C^3~|~ |\Im(z_j)|< \tilde{r} \}$, where $\tilde{r}<1$. Hence, for $r\in(0,\tilde{r})$ and $k\in\R^3$ the 3-dimensional Cauchy formula allows us to estimate
	\begin{align*}
	|\hat{V}^{(\alpha)}(k)| 
	&\leq 
	\frac{\alpha!}{(2\pi)^3} \left|\int_{\Gamma} \frac{\hat{V}(z)}{(k-z)^{\alpha+\mathbf{1}}}\dd z \right| \leq
	\frac{\alpha!}{(2\pi)^3} \frac{1}{r^{|\alpha|}}\int\limits_0^{2\pi} \int\limits_0^{2\pi} \int\limits_0^{2\pi} \frac{4\pi}{1+|r-|k||^2} \dd t_1 \dd t_2 \dd t_3\\ 
	&= 
	\frac{\alpha!}{r^{|\alpha|}} \frac{4\pi}{1+|r-|k||^2},
	\end{align*}
	where $\Gamma$ is the distinguished boundary of the 3 dimensional polydisc of radius $r$. Let $\beta>0$. By the above computations we thus have
	\begin{align*}
	\{k\in\R^3~|~ |\hat{V}^{(\alpha)}(k)| >\beta\} 
	&\subset 
	\bigg\{k\in\R^3~\bigg|~ \frac{\alpha!}{r^{|\alpha|}} \frac{4\pi}{1+|r-|k||^2} >\beta\bigg\}\\
	&=
	\bigg\{k\in\R^3~\bigg|~ r-\left(\frac{1}{\beta_0}-1\right)^{\frac{1}{2}} < |k| < \left(\frac{1}{\beta_0}-1\right)^{\frac{1}{2}} + r \bigg\}\\
	&\subset 
	\bigg\{k\in\R^3~\bigg|~  |k| < \left(\frac{1}{\beta_0}-1\right)^{\frac{1}{2}} + r \bigg\},
		\end{align*}
	where $\beta_0 = \frac{r^{|\alpha|}}{\alpha!4\pi}\beta$. Due to $\{k\in\R^3~|~(1+|r-|k||^2)\inv>\beta\}=\emptyset$ for $\beta\geq1$, we can use the above inclusions to compute
	\begin{align*}
	\|\hat{V}^{(\alpha)} \|_{s,w} 
	&= 
	\sup_{\beta>0} \beta \left| \{k\in\R^3~|~ |\hat{V}^{(\alpha)}(k)| >\beta\} \right|^{\frac{1}{s}}\\
	&\leq
		4\pi\frac{\alpha!}{r^{|\alpha|}} \sup_{\beta\in(0,1)} \beta \left| \bigg\{k\in\R^3~\bigg|~ \frac{1}{1+|r-|z||^2} >\beta \bigg\} \right|^{\frac{1}{s}}\\
	&\leq
	4\pi\frac{\alpha!}{r^{|\alpha|}} \sup_{\beta\in(0,1)} \beta \left| \bigg\{k\in\R^3~\bigg|~ |k| < \left(\frac{1}{\beta}-1\right)^{\frac{1}{2}} + r \bigg\} \right|^{\frac{1}{s}}\\
	&=
	16\pi^2\frac{\alpha!}{r^{|\alpha|}} M_s,
	\end{align*}
	where 
	\begin{align*}
	M_s=\sup_{\beta\in(0,1)}  \beta^{1-\frac{3}{2s}}\left( \left(1-\beta\right)^{\frac{1}{2}} + r\beta^{\frac{1}{2}} \right)^{\frac{3}{s}} < \infty,
	\end{align*}
	due to $s>3/2$. Choosing $c= \max\{16\pi^2 M_s ,1\} r\inv$ completes the proof of the statement.
\end{proof}

\section*{Acknowledgments}
The authors would like to thank Jacob Schach M\o ller for useful
discussions. M.~E.\ acknowledges the support of the Lundbeck
Foundation and the German Research Foundation (DFG) through the
Graduiertenkolleg 1838 and M.G.R. acknowledges support from the Danish
Council for Independent Research | Natural Sciences, grant 12-124675,
"Mathematical and Statistical Analysis of Spatial Data".

\bibliographystyle{amsplain} \bibliography{ER}

\end{document}